\newcommand*\emptycirc[1][0.85ex]{\tikz\draw (0,0) circle (#1);} 
\newcommand*\fullcirc[1][0.85ex]{\tikz\fill (0,0) circle (#1);}
\theoremstyle{remark}  % 标题斜体，不加粗
\newtheorem{theorem}{Theorem}   % 带计数器的
\begin{document}
\newcommand{\minitab}[2][l]{\begin{tabular}{#1}#2\end{tabular}}

\def\BibTeX{{\rm B\kern-.05em{\sc i\kern-.025em b}\kern-.08em
    T\kern-.1667em\lower.7ex\hbox{E}\kern-.125emX}}

    %Physical Layer Challenge-Response Authentication for Ambient Backscatter Devices
%Physical Layer Authentication between Ambient Backscatter Devices Based on Challenge and Response

\title{AmbShield: Enhancing Physical Layer Security with Ambient Backscatter Devices against Eavesdroppers}

\author{Yifan~Zhang,~\IEEEmembership{Graduate Student Member,~IEEE,}
        Yishan Yang,~\IEEEmembership{Graduate Student Member,~IEEE,}  \\ Masoud Kaveh,~\IEEEmembership{Member,~IEEE,}      
        Riku Jäntti,~\IEEEmembership{Senior Member,~IEEE,}
        Zheng~Yan,~\IEEEmembership{Fellow,~IEEE,} \\
        Dusit Niyato,~\IEEEmembership{Fellow,~IEEE,}
        and Zhu~Han,~\IEEEmembership{Fellow,~IEEE.}
% \thanks{This project has received funding from the Smart Networks and Services Joint Undertaking under the European Union’s Horizon Europe research and innovation programme (Ambient-6G) under Grant Agreement No 101192113.}
\thanks{Y. F. Zhang, M. Kaveh, and R. Jäntti are with the Department of Information and Communications Engineering, Aalto University, Espoo, 02150, Finland. (email: yifan.1.zhang@aalto.fi, masoud.kaveh@aalto.fi, riku.jantti@aalto.fi)}
% \thanks{S.H. Zeng is with the Department of Electrical and Computer Engineering, Princeton University, NJ, USA (email: shuhao.zeng96@gmail.com).}% 
\thanks{Y. S. Yang and Z. Yan are with the School of Cyber Engineering, Xidian University, Xi'an, Shaanxi, 710026 China. (email: zyan@xidian.edu.cn, ysyangxd@stu.xidian.edu.cn)}% 
\thanks{Dusit Niyato is with the College of Computing and Data Science, Nanyang Technological University,  639798, Singapore (e-mail: dniyato@ntu.edu.sg).}
\thanks{Z. Han is with the Department of Electrical and Computer Engineering, University of Houston, Houston, TX 77004 USA, and also with the Department of Computer Science and Engineering, Kyung Hee University, Seoul 446-701, South Korea (email: hanzhu22@gmail.com)}
 }

% The paper headers
\markboth{Journal of \LaTeX\ Class Files}%
{Shell \MakeLowercase{\textit{et al.}}: A Sample Article Using IEEEtran.cls for IEEE Journals}
\maketitle

\begin{abstract}

Passive eavesdropping compromises confidentiality in wireless networks, especially in resource-constrained environments where heavyweight cryptography is impractical.
Physical layer security (PLS) exploits channel randomness and spatial selectivity to confine information to an intended receiver with modest overhead. However, typical PLS techniques, such as using beamforming, artificial noise, and reconfigurable intelligent surfaces, often involve added active power or specialized deployment, and, in many designs, rely on precise time synchronization and perfect CSI estimation, which limits their practicality.
Meanwhile, the role of ambient backscatter devices (AmBDs) in potentially strengthening the legitimate channel while limiting eavesdroppers in generalized wireless network settings has not been fully investigated.
To this end, we propose AmbShield, an AmBD–assisted PLS scheme that leverages naturally distributed AmBDs to simultaneously strengthen the legitimate channel and degrade eavesdroppers' without requiring extra transmit power and with minimal deployment overhead. In AmbShield, AmBDs are exploited as friendly jammers that randomly backscatter to create interference at eavesdroppers, and as passive relays that backscatter the desired signal to enhance the capacity of legitimate devices. We further develop a unified analytical framework that analyzes the exact probability density function (PDF) and cumulative distribution function (CDF) of legitimate and eavesdropper signal-to-interference-noise ratio (SINR), a closed-form secrecy outage probability (SOP), its high-SNR asymptote, and a secrecy diversity order (SDO). The analysis provides clear design guidelines on various practical system parameters to minimize SOP. Extensive experiments that include Monte Carlo simulations, theoretical derivations, and high-SNR asymptotic analysis demonstrate the security gains of AmbShield across diverse system parameters under imperfect synchronization and CSI estimation.

\end{abstract}

\begin{IEEEkeywords}
Physical layer security,  ambient backscatter device (AmBD), eavesdropper, secrecy outage probability.
\end{IEEEkeywords}

% \vspace{-0.5cm}
\section{Introduction} \label{s1_intro}

%第一段：介绍现有无线通信的安全问题
% \begin{figure}[tp]
%     \centering
%     \includegraphics[width=0.9\linewidth]{intr.pdf}
%     \caption{Widespread applications of AmBDs in the real world.}
%     \label{BDs}
%     \vspace{-0.7cm}
% \end{figure}

With the rapid evolution of 5G wireless communication technologies and beyond 6G, the demand for secure and reliable communication has grown significantly \cite{10044183}. Emerging wireless networks not only promise massive connectivity and ultra-high data rates, but also introduce increased vulnerability to passive eavesdropping attacks due to their inherent openness and broadcast nature \cite{zhang2025artificial}. Eavesdropping attacks pose a critical challenge due to their passive nature, which allows unauthorized entities to intercept confidential transmissions without being detected, severely undermining information security. Traditional cryptographic measures provide computational security, but often fail to protect the physical wireless medium due to limited device resources \cite{9524814} or potential key compromise \cite{li2024ai}. Consequently, there has been growing attention to physical layer security (PLS) \cite{10623395}, a complementary approach that takes advantage of the fundamental properties of wireless channels, such as fading, interference, and noise, to protect communication at the physical layer and achieve information-theoretic secrecy, providing an additional layer of protection against eavesdroppers.
%第二段：介绍物理层安全

Despite the advantages of PLS in wireless networks, existing schemes fall short in practice since it typically requires additional active power \cite{9350170,9349152}, costly hardware and deployment \cite{10227884,10012331,9439833,9763499,10534116}, and lack validation under imperfect synchronization and channel state information (CSI) estimation \cite{9350170,9349152,10227884,10012331,9439833,9763499,10534116,8737501,9745773}. For example, designs based on artificial noise (AN) \cite{9350170,9349152} deliberately inject a jam from a legitimate transmitter to deteriorate an eavesdropping link, increasing active power consumption, and often require sophisticated successive interference cancelation (SIC) at an intended receiver to eliminate the negative effect of AN. Similarly, beamforming PLS schemes \cite{10227884} increase the power consumption in weight control and the hardware overhead in the beam-forming circuitry. Although some schemes \cite{9439833,9763499,10534116} exploit economic and low-power-consumption reconfigurable intelligent surfaces (RISs) that shape propagation to enhance secrecy, their practical deployment is hindered by limited installation flexibility and high cost. Moreover, performance is highly sensitive to synchronization and CSI estimation errors, which can degrade secrecy, yet are rarely tested in design and validation \cite{8543573}.

In contrast, ambient backscatter devices (AmBDs) have recently drawn considerable attention for reliability and security purposes \cite{10130082}, due to their unique operational characteristics and broad applicability, and have even been incorporated into the 3GPP Release 19 \cite{3gppTR22840} study on ambient IoT. Most AmBDs operate passively by reflecting ambient radio frequency (RF) signals to transmit data \cite{yang2017modulation}, eliminating the need for active power supplies and significantly reducing the complexity and energy consumption of the hardware. In addition, their compact and simple design makes them extremely cost-effective, with unit costs of 7 to 15 cents (USD) \cite{8368232}, and a small size within a few square centimeters \cite{abdulghafor2021recent}. These distinctive features have made AmBDs widely adopted in applications such as conformal monitoring \cite{10130082} and logistics \cite{ahmed2024noma}, for ultra-low power data transmission or signal strength enhancement. 
Therefore, some previous studies have investigated PLS in an AmBD-aided network, using techniques such as AN \cite{8737501} and RIS \cite{10337761}, where AmBDs are conventional communication users. In addition, a work in \cite{10109228} first used AmBD as a friendly jammer to achieve PLS in a particular cooperative network.

Nevertheless, the role of AmBDs in PLS has not been fully invertigated. Most existing studies either treat AmBDs as communication users \cite{8737501,9745773,
10337761} that themselves require protection against eavesdropping or embed them in bespoke, scenario-specific architectures \cite{10109228} for secrecy enhancement, leaving their utility in general peer-to-peer links largely unexamined. In fact, the backscatter of an AmBD can act as a passive relay that strengthens the legitimate channels. Moreover, because AmBDs are passive reflectors, their backscattered signals can induce location-dependent interference at eavesdroppers while only minimally perturbing the intended receiver. By simultaneously improving legitimate signal quality and impairing eavesdropper reception, AmBDs offer a low-overhead avenue to enhance secrecy, positioning them as promising auxiliary enablers of practical PLS.

%Despite their widespread usage, existing literature has rarely explored the application of AmBDs specifically in the context of physical-layer security. The inherent capability of AmBDs to create controlled interference through reflective signal patterns presents a novel opportunity to enhance security by disrupting unauthorized eavesdropping attempts. Thus, leveraging AmBDs' reflective properties to enhance secrecy performance represents a promising and innovative direction for future physical-layer security solutions.

%\subsection{Motivations and Contributions}
To fully facilitate the potential of AmBDs for security, this paper proposes AmbShield, a practical AmBD-assisted wireless system that leverages naturally distributed AmBDs in underlying environments to enhance PLS by simultaneously relaying legitimate transmission and jamming eavesdroppers. We first model the system, describe the function of AmBDs in the system, analyze the signal-to-interference ratio (SINR) for both legitimate and eavesdropping channels, and summarize the design goal of AmbShield. Then, we analyze the security performance of the proposed system by deriving the probability density function (PDF), the cumulative distribution function (CDF), and a closed-form secrecy outage probability (SOP). Furthermore, we analyze the asymptotic characters of the SOP in a high-SNR region and prove the secrecy diversity order (SDO) of one. Finally, we present a comprehensive performance evaluation that validates AmbShield’s security gains and corroborates our analytical results. The primary contributions of this paper are summarized as follows:

\begin{itemize}
\item We introduce AmbShield, a practical system that utilizes off-the-shelf AmBDs naturally distributed in real-world environments to enhance PLS against eavesdroppers. Specifically,  AmBDs act as friendly jammers that simultaneously relay legitimate transmission and jam eavesdroppers by randomly backscattering legitimate signals using a pre-set pattern. We develop a complete system model, derive SINR expressions for legitimate and wiretap links, and articulate the design goals of AmbShield.

\item We establish comprehensive performance analyzes for AmbShield: (i) the exact PDF and CDF of the legitimate SINR via Hankel transforms \cite{poularikas2018transforms} and Mellin–Barnes \cite{coelho2019finite} techniques with \cite{mathai2009h} Fox–H representations; (ii) the exact CDF and PDF of the eavesdropper SINR using Laplace transforms and exponential integration functions; (iii) a closed-form SOP in the Fox–H form and its high-SNR asymptotic analysis that exposes the secrecy diversity order, which is proved to be 1, ensuring linear secrecy improvement. These results yield tractable expressions that reveal how various system parameters, such as AmBDs, placement, and channel statistics, shape secrecy.

\item  We conduct a comprehensive performance evaluation of AmbShield through Monte Carlo simulations, theoretical analysis, and high-SNR asymptotic characterization. The simulation results closely follow the theoretical and asymptotic curves across a wide range of parameters, confirming both the accuracy and robustness of our analysis. Notably, AmbShield achieves an SOP below $10^{-3}$ under practical system settings, demonstrating its resilience to imperfections in both synchronization and CSI estimation. These findings highlight the strong potential of leveraging existing AmBDs to enhance PLS in real-world deployments.

\end{itemize}

%\subsection{Remainder and Notations}

\textbf{\textit{Remainder:}} The rest of the paper is organized as follows. Section~\ref{related work} reviews related work. Section~\ref{System model} introduces the AmbShield system model, including the channel and receiver models as well as design goals. Section~\ref{SNR Distribution} provides performance analysis, covering the SNR distribution, the Fox-H expression of SOP, and high-SNR asymptotics. Section~\ref{performance} presents numerical and Monte Carlo simulations to validate the analysis and examine the impact of key parameters. Finally, Section~\ref{conclusion} concludes the paper.

\textbf{\textit{Notations:}}
Though the paper is in italics $x$, in bold lowercase $\mathbf{x}$, and in uppercase $X$ denote scalars, vectors, and arrays, respectively.
$\mathbb{E}[\cdot]$ denotes expectation, $\Pr\{\cdot\}$ denotes probability, and $\mathcal{L}_X(s)$ denotes the Laplace transform of $X$.
$J_0(\cdot)$ and $K_0(\cdot)$ are Bessel functions of the first kind and modified of the second kind.
$E_1(\cdot)$ is the exponential integral.
$H^{m,n}_{p,q}[\cdot]$ is the Fox--$H$ function and $F^{(K)}_D(\cdot)$ is the Lauricella hypergeometric function.

\begin{table*}[]
\footnotesize
\centering
\caption{Comparison between AmbShield and related works.}
\label{related_work}
{\fullcirc: satisfy a criterion; \emptycirc: do not satisfy a criterion; -: not related.}
\\[2mm]

\begin{tabular}{c|c|c|c|c|c}
\toprule[1.5pt]
Reference & \minitab[c]{Main\\Technique} & \minitab[c]{AmBD\\Function} & \minitab[c]{Extra Power \\Avoidance} & \minitab[c]{Minimal   \\ Deployment Overhead} & \minitab[c]{Imperfect Sync. \&  CSI  \\  Estimation Evaluation} \\
\midrule

\cite{9350170} & AN & - & \emptycirc & \emptycirc & \emptycirc \\
\cite{9349152} & AN & - & \emptycirc & \fullcirc & \emptycirc \\
\cite{10227884,10012331}& Beamforming& - & \emptycirc & \emptycirc & \emptycirc \\
%\cite{} & Beamforming & - & \fullcirc & \emptycirc & \emptycirc \\
\cite{9439833,9763499,10534116}& RIS & - & \fullcirc & \emptycirc & \emptycirc \\
\cite{8737501,9745773} & AN &  user & \emptycirc & \emptycirc & \emptycirc \\
\cite{10337761}& RIS &  user & \fullcirc & \emptycirc & \emptycirc \\
%\cite{9319204}& AN &  user & \emptycirc & \emptycirc & \emptycirc \\
\cite{10109228}& AmBD & friendly jammer & \fullcirc & \emptycirc & \emptycirc \\

\midrule

AmbShield & AmBD & friendly jammer & \fullcirc & \fullcirc & \fullcirc \\

\bottomrule[1.5pt]
\end{tabular}
\begin{tablenotes}
\item The results presented in the table are derived from the original results in the corresponding papers.
\end{tablenotes}
\vspace{-0.5cm}
\end{table*}

\section{Related Work} \label{related work}
In this section, we review state-of-the-art PLS schemes that leverage the typical technique against eavesdropping. Based on the main technique used in the scheme, we classify them into AN-based, beamforming-based, and RIS-based schemes. In addition, we review current PLS schemes related to AmBDs.
Table \ref{related_work} highlights the difference and novelty of AmbShield compared to existing PLS schemes in terms of main technique, AmBD function, extra power avoidance, minimal deployment overhead, and imperfect synchronization \& CSI estimation evaluation.

\vspace{-0.3cm}
\subsection{AN-based PLS Schemes}
 AN-based methods actively generate controlled jamming signals at legitimate transmitters to impair the eavesdropper's channel quality while minimally impacting the intended receiver. These methods often leverage sophisticated signal coding or SIC mechanisms to further enhance achievable secrecy performance. For example, Gong et al. \cite{9350170} employed a full-duplex user to emit AN to jam eavesdroppers for PLS in a large-scale downlink nonorthogonal multiple access (NOMA) network and derived the SOP and asymptotic diversity results using stochastic geometry.
They demonstrated that the AN-aided NOMA scheme achieves lower secrecy outage and higher secrecy throughput than orthogonal multiple access (OMA) and NOMA without AN. Although it is independent of eavesdropping CSI, it cannot avoid extra power due to AN transmission and deployment costs, due to the need for SIC mechanisms at legitimate receivers. Milad Tatar et al. \cite{9349152} proposed a wireless communication system able to support two-phase AN-aided secure unmanned aerial vehicle (UAV) and jointly optimized the UAV trajectory and AN power to maximize the average secrecy rate with clear gains over benchmarks. Their scheme leverages existing UAV infrastructure and employs pseudo-random AN, enabling straightforward cancellation at the legitimate receiver, thereby reducing deployment cost. However, similar to \cite{9350170}, they also need extra power for the AN transmission, and neither of the above two works \cite{9350170,9349152} considers imperfect synchronization and CSI estimation.

 \vspace{-0.3cm}
 \subsection{Beamforming-based PLS schemes}
 Beamforming-based PLS schemes, especially prevalent in multi-antenna or massive multiple-input multiple-output (massive-MIMO) systems, utilize directional beam steering to concentrate signals towards legitimate receivers, thereby significantly reducing information leakage to unauthorized parties. In recent years,  Li et al. \cite{10012331} formed a virtual antenna array from multiple UAVs and performed collaborative beamforming with joint positioning of the UAV, excitation weight design, and base station (BS) service ordering optimized by a parallelizable multi-objective dragonfly algorithm to maximize secrecy rate while suppressing sidelobes and limiting motion energy. However, they need additional overhead for multi-UAV synchronization and deployment. Su et al. \cite{10227884} proposed a sensing-assisted PLS framework where BS first estimates the directions of the eavesdroppers using a Capon-based detector and then jointly optimizes the AN-aided transmit beam formation based on the estimated result. They demonstrated that sensing and communication bring mutual gains in security. However, they need extra power for the AN and incur high overhead on the beamforming control and SIC. In addition, the impact of imperfect synchronization and CSI estimation was not evaluated in the above two works \cite{10012331,10227884}.

 \vspace{-0.3cm}
\subsection{RIS-based PLS schemes}
RIS-assisted security techniques have recently drawn considerable attention, particularly for next-generation communication networks. RIS devices serve as passive or active intelligent reflectors, dynamically reshaping the electromagnetic propagation environment. 
%This capability enables joint optimization of security and communication performance, enhancing secrecy rates, and providing increased operational flexibility. 
Zhang et al. \cite{9439833} analyzed RIS-aided downlink MIMO networks and derived closed-form SOP and ASR, which shows that passive beamforming RIS markedly improves PLS. Sun et al.  \cite{9763499} introduced a multilayer RIS–assisted transmitter
and an energy-efficient hybrid beamforming design under imperfect angular conditions to secure integrated terrestrial–aerial networks. Zhang et al. \cite{10534116} jointly optimized BS beamforming and RIS phase shifts to maximize the sum secrecy rate, providing optimal and low-complexity solutions and showing clear PLS gains for RIS-assisted two-way relay wiretap systems.
These schemes avoid extra power by using nearly passive elements in the RIS. Nevertheless, employing RIS introduces deployment and control costs, while imperfect synchronization and CSI estimation, which can severely undermine performance, remain largely unaddressed in the above works.

% However, the above works achieved PLS require an extra power supply in the AN transmission and extra hardware cost in the beamforming control. Despite some RISs being passive or capable of energy harvesting \cite{meng2023ris}, their practical deployment is hindered by limited installation flexibility and high cost. In contrast, we propose to use existing on-the-shelf AmBDs to enhance security in practice, which incur minimal hardware `and deployment cost. 

\vspace{-0.5cm}
\subsection{AmBD-related PLS schemes}

Recent studies have explored PLS in ambient backscatter–aided networks. Wang et al. \cite{8737501,9745773} employed AN to secure a backscatter network in which an AmBD acts as a communication user powered by a full duplex access point (AP) that simultaneously transmits energy-bearing signals and AN to prevent eavesdropping. However, this approach incurs extra power for the AN, additional overhead due to the SIC deployment, extra power for AN transmission, and assumes knowing perfect eavesdropping CSI. Kaveh et al. \cite{10337761} used RIS to secure an AmBD-to-AmBD link in smart grid scenarios, deriving SOP and average secrecy rate (ASR) in closed form via Fox-$H$ functions and analyzing their asymptotics. Notably, no extra power for signal transmission or eavesdropping CSI is required. 
%Li et al. \cite{9319204} investigate ambient backscatter NOMA under residual hardware impairments, channel-estimation errors, and imperfect SIC, proposing a source-generated AN scheme and deriving analytical/asymptotic outage probability (OP) and intercept probability (IP) to reveal reliability–security trade-offs; this design avoids extra hardware/deployment cost and eavesdropper-CSI, but still needs additional AN power.
The above works treat AmBDs as communication users that must be protected, while practical issues such as imperfect synchronization and imperfect CSI estimation are not explicitly addressed. In addition, closer to our AmbShield concept is \cite{10109228}, which proposes a two-stage ambient backscatter cooperative scheme in which AmBDs assist the legitimate link in the first slot and deliberately jam both the legitimate and the eavesdropping channels in the second. Although it is the first work to treat AmBDs as friendly jammers, it is confined to a specific cooperative architecture that requires an additional relay, and thus differs from AmbShield in both its system model and application scenarios. In addition, the impact of imperfect synchronization and imperfect CSI estimation remains unexplored. 

\textbf{\emph{Summary.}} As summarized in Table~\ref {related_work}, AmbShield leverages naturally distributed AmBDs in a dual role, both as passive relays to strengthen the legitimate link and as friendly interference to jam eavesdroppers, without additional transmit power and dedicated infrastructure. Importantly, AmbShield is evaluated under synchronization and CSI etimation imperfections, demonstrating practical robustness. These advantages highlight the practical potential of AmbShield.

\begin{figure}[tp]
    \centering
    \includegraphics[width=0.95\linewidth]{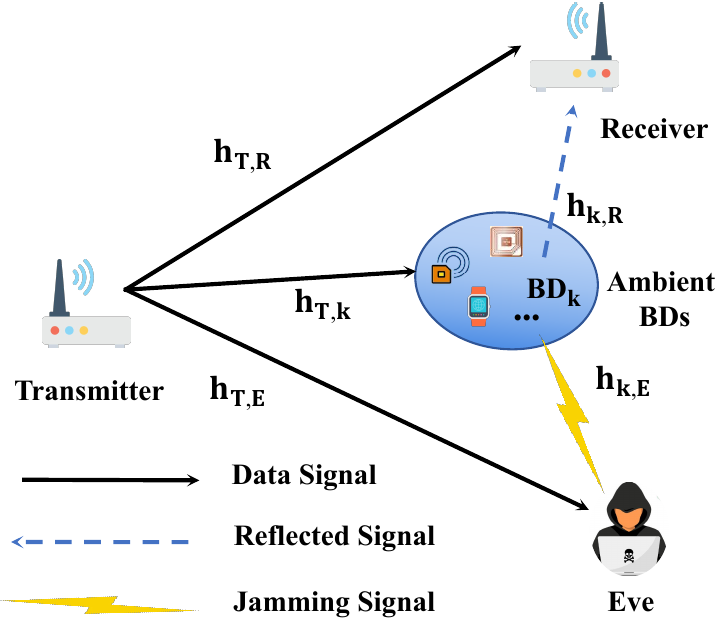}
    %\vspace{-0.1cm}
    \caption{System model of AmbShield, where AmBDs are leveraged as friendly jammers that randomly reflect signals to simultaneously enhance communication performance and jam eavesdroppers. }
    \label{syse}
   \vspace{-0.5cm}
\end{figure}
\vspace{-0.1cm}
\section{System Model} \label{System model}
Fig. \ref{syse} shows the system model of the proposed system, where a transmitter transmits data signals to communicate with a receiver. Meanwhile, we consider that there is a passive eavesdropper named Eve who tries to decode the confidential message sent by the transmitter.  
The system also comprises $K$ ambient BDs distributed around the receiver. They randomly backscatter the incident signal from the transmitter towards the receiver and the eavesdropper.
Following common practice in AmBD-aided PLS analyses and prototypes \cite{10337761,9319204}, we consider a single antenna model (SISO) for all nodes to isolate the effect of AmBD on secrecy. The derivations extend to MIMO by replacing scalar channels with effective post-combining gains, leaving the qualitative insights unchanged \cite{oggier2011secrecy}.

\vspace{-0.3cm}
\subsection{Channel Model}
Denote $h_{t,r}$, $h_{t,k}$, and $h_{k,r}$ as the channels from the transmitter to the receiver, from the transmitter to the $k$-th BD, and from the $k$-th  BD to the receiver, respectively. ${h}_{t,e}$ and 
${h}_{k,e}$ are the channels from the transmitter to the eavesdropper and from the $k$-th  BD to the eavesdropper, respectively. The proposed system is applied in fields where some obstacles block the line-of-sight channel between the transmitter, the receiver, the BD, and Eve, such as logistics and industries \cite{10556582,ahmed2024noma}. Thus, all related channels between device $i$ and $j$ where $\{i,j\} \in \{t,k,e,r\}$ are modeled by an independent complex Gaussian random variable (Rayleigh fading) \cite{5559377} subject to large-scale fading as follows:

\begin{equation}
h_{i,j}
= \sqrt{ d_{ij}^{-\alpha}} \tilde h_{ij},
\end{equation}
where \(\alpha\) is the path loss exponent and \(d_{ij}\) is the distance between the device $i$ and the device $j$. Moreover, $h_{ij}  \sim \mathcal{CN} (0,\Omega_{ij} )$, where $\Omega_{ij} = d_{ij}^{-\alpha}$ and $\tilde h_{ij}$ represent the nonlinear sight component (NLoS), whose entries are a standard independent
and identically distributed complex Gaussian distribution (i.i.d.), i.e., $\tilde h_{ij} \sim \mathcal{CN}(0,1)$.

\vspace{-0.3cm}
\subsection{Receiver Model}
During a communication period, both the receiver and Eve will receive the data signals directly from the transmitter and the signal backscattered by BDs. Thus, the signal received by the receiver and the eavesdropper can be written as

\begin{align}
    y_r &= \sqrt{P}S(t)\left( h_{t,r} + \sum_{k=1}^K \phi_k h_{k,r} h_{t,k} \right)   + n_r, \\
    y_{e} &=  \sqrt{P}S(t)\left( h_{t,e} + \sum_{k=1}^K \phi_k h_{k,e} h_{t,k} \right)  + n_e,
\end{align}
where $S(t)$ is the signal transmitted to the transmitter. $
n_u \sim \mathcal{CN}(0, \sigma_r^2)$ and $n_e \sim \mathcal{CN} (0,\sigma_e^2  )$, are the AWGN \cite{tse2005fundamentals} at the receiver and the Eve, respectively. $P$ is the average transmit power. $\phi_k$ denotes the backscatter modulation of the $k$-th on the incident signal, which can be expressed as

\begin{equation}
\phi_k = a_k e^{j\varphi_k},
\end{equation}
where $a_k$ and $\varphi_k$ represent its amplitude and phase change, respectively. Considering the practical amplification ability of the amplifier, the amplitude is restricted by the maximum gain, that is, $a_k \le a_{\max},\ \forall k$. The BDs randomly backscatter the signal based on a pre-set pattern or seed, which is pre-known by legitimate receivers, and they demodulate or filter the backscatter signal. Meanwhile, the eavesdropper cannot obtain the backscatter modulations of BDs due to the lack of pre-set knowledge \cite{10599122}.

Then, the legitimate receiver distinguishes the signal from different BDs and filters the backscatter information $\phi_k$ to obtain the data signal from the backscattered signals. The decoded data signal from the backscatter signal is used to enhance communication. The instantaneous SINR at the receiver can be presented as 

\vspace{-0.3cm}
\begin{equation}\label{eq:gammaDef}
  \gamma_{r} =  \frac{P}{\sigma_{r}^{2}} | ( h_{t,r}  +  \sum_{k=1}^{K} \phi_{k} h_{k,r} h_{t,k}  ) |^{2}.
\end{equation}

Unlike the legitimate receiver, the backscatter modulation of the $K$ BDs is unknown to Eve, which it the interference. Hence, the instantaneous SINR at Eve is

\begin{equation}\label{eq:gamma_Eve_vec_SISO}
  \gamma_{e}
   = 
  \frac{P |h_{t,e}|^{2}}
       {\displaystyle
        \sigma_e^{2}+
        P\sum_{k=1}^{K}
          a_k^{2} 
          |h_{t,k}|^{2} |h_{k,e}|^{2}}.
\end{equation}

Compared \eqref{eq:gammaDef} with \eqref{eq:gamma_Eve_vec_SISO}, a performance difference can be found between the legitimate receiver and Eve, which leads to the potential secrecy capacity.

\vspace{-0.3cm}
\subsection{Design Goals}

In this paper, we propose a practical and secure AmBD-assisted wireless system that leverages off-the-shelf AmBDs naturally present in the environment to protect communications. To ensure practicality, the design should not introduce additional power, hardware, or dedicated deployment overhead. It should also be robust to synchronization and channel estimation errors. In addition, our goal is to analyze the security of AmbShield by obtaining key metrics, including the SOP and the SDO. The next section details the corresponding performance analysis.

% . Furthermore, we aim to
% quantify the security performance of the proposed system in order to demonstrate and reveal the secrecy gain brought by AmBDs in practice. This includes the derivation for the PDF, the CDF, SOP, and the SDO.
% The next section presents the details of the performance analysis for the proposed system.

\vspace{-0.1cm}
\section{Performance Analysis} \label{SNR Distribution}

%In this section, we first analyze the PDF and the CDF of AmbShield for legitimate and eavesdropping channels, respectively. Then, we derive the closed-form SOP in a Fox-$H$ form. Based on that, we analyze the asymptotic feature of AmbShield in the high-SNR region and prove the SDO of one. In the following, we detail the analysis.

\subsection{PDF and CDF of AmbShield}

\subsubsection{Legitimate Channel}

Let $X = ( h_{t,r}  +  \sum_{k=1}^{K} \phi_{k} h_{k,r} h_{t,k} )$,  the instantaneous legitimate SINR in \eqref{eq:gammaDef} is expressed as 
\begin{equation}
  \gamma_{r} = \frac{P}{\sigma_{r}^{2}} |X |^{2}.
\end{equation}

Since the channel response of the coherent overlay $X$, its PDF cannot be calculated directly using a standard function.
Therefore, we derive its PDF by separately calculating the zero-order Hankel transform \cite{poularikas2018transforms} for direct and BD cascade links. For the direct link, let $U=\lvert h_{t,r}\rvert^{2}$ and $h_{t,r} \sim \mathcal{CN}(0, \Omega_{tr})$, we thus have its zero-order Hankel transform:

\begin{align}
\Phi_{U}(t)&=\int_{0}^{\infty}f_{U}(z) J_{0}   (2\sqrt{tz} ) \mathrm dz  \nonumber  \\
&=\int_{0}^{\infty}
      \frac{1}{\Omega_{tr}} e^{-z/\Omega_{tr}}
       J_{0}   (2\sqrt{t z} ) \mathrm{d}z,\label{PhiU}
\end{align}
where $f_{U}(z)
    = \frac{1}{\Omega_{tr}} e^{-z/\Omega_{tr}}$ is derived since $U$ obeys the exponential distribution with the parameter $\frac{1}{\Omega_{tr}}$.

According to the Weber-Sonine integral \cite{watson1922treatise}, we have$\int_{0}^{\infty}   e^{-z/\Omega_{tr}}  J_{0}   (2\sqrt{t z} ) \mathrm{d}z = \Omega_{tr} e^{-t \Omega_{tr}}$. Thus, \eqref{PhiU} can be further calculated as 
\begin{equation}
\Phi_{U}(t) = \frac1{\Omega_{tr}} (\Omega_{tr} e^{-t \Omega_{tr}})=e^{-t \Omega_{tr}}
\end{equation}

For the BD cascade link, let $V_{k}=|\phi_{k}h_{k,r}h_{t,k}|^{2}$, the zero‑order Hankel transform is calculated as
\begin{equation} \label{eq:HandoublePDF}
\Phi_{V_k}(t) 
               =\int_{0}^{\infty} f_{V_k}(v) 
                 J_0   (2\sqrt{t v} ) \mathrm dv.
\end{equation}

Since $h_{k,r}$ and $h_{t,k}$ are independent of each other and $\lvert h_{k,r}\rvert^{2}\sim\mathrm{Exp}   (\tfrac{1}{\Omega_{k,r}} ), \lvert h_{t,k}\rvert^{2}\sim\mathrm{Exp}   (\tfrac{1}{\Omega_{t,k}} )$, $V_{k}$ obeys the double-Rayleigh distribution, whose PDF can be calculated as

\vspace{-0.3cm}
\begin{equation}\label{eq:doublePDF}
  f_{V_{k}}(v)=\frac{2}{\beta_{k}} K_{0}   (2\sqrt{v/\beta_{k}} ),
\end{equation}
where $\beta_k = a_k^2   \Omega_{t,k}   \Omega_{k,r}$ is the cascaded power gain of the $k$‑th BD, and the proof is derived in Appendix \ref{Appendx: V_k}.

Substituting \eqref{eq:doublePDF} into \eqref{eq:HandoublePDF} gives

\begin{equation}
\Phi_{V_k}(t)
= \frac{2}{\beta_k}
\int_{0}^{\infty}
K_0 (2\sqrt{\frac{\nu}{\beta_k}} ) 
J_0 (2\sqrt{t\nu} ) 
\mathrm{d}\nu. \label{HanV}
\end{equation}

Let  $x = \sqrt{\frac{v}{\beta_k}}$, so we have $v=\beta_k x^2$, and $\mathrm{d}v=2\beta_k x \mathrm{d}x$. Now, \eqref{HanV} can be simplified to 

\begin{equation}
\Phi_{V_k}(t)
=4\int_{0}^{\infty}
x K_0(2x) 
J_0 (2x\sqrt{\beta_k t} ) 
\mathrm{d}x.
\end{equation}

According to Weber–Schafheitlin integral \cite{gradshteyn2014table} $\int_{0}^{\infty}
x K_0(2a x) 
J_0(2b x) 
\mathrm{d}x
=\frac{1}{4 (a^2+b^2 )},
\qquad \Re\{a\}>0$ and let $a=1,\quad b=\sqrt\beta_k t$, we have:

\begin{equation}
 \Phi_{V_{k}}(t)= (1+\beta_{k}t )^{-1},
\end{equation}

For zero-order Hankel transform, we have $\Phi_{|Z_1+Z_2|^2}(t)=\Phi_{|Z_1|^2}(t) \Phi_{|Z_2|^2}(t)$, where $Z$ denotes a complex random quantities symmetric to a circle. Therefore, $\Phi_{X}(t)$ can be calculated though

\vspace{-0.3cm}
\begin{equation}\label{eq:PhiX}
  \Phi_{X}(t)=\Phi_{U}(t)\prod_{k=1}^{K}\Phi_{V_k}(t)= e^{-\Omega_{tr}t}\prod_{k=1}^{K} (1+\beta_{k}t )^{-1}.
\end{equation}

\textit{Remark 1} (Fox–$H$ function \cite{coelho2019finite,mathai2009h}): 
For integers $0\le n\le p$ and $0\le m\le q$, with parameters
$\{a_i,A_i\}_{i=1}^p$ and $\{b_j,B_j\}_{j=1}^q$ satisfying $A_i>0$, $B_j>0$,
the Fox--$H$ function is defined by the Mellin--Barnes integral in \eqref{eq:FoxH_def}. The Fox--$H$ expressions in established special functions constitute closed forms, which could be considered as closed form.
Since it is difficult to obtain a uniform and compact expression for PDF using elementary functions or finite-term hypergeometric functions based on the coherently overlapped SINR in \eqref{eq:gammaDef}, Fox--$H$ function is used in this manuscript to unify all intermediate transforms, such as Hankel and Laplace, into a single special-function layer that cleanly encodes the poles contributed by the direct path and each BD cascade.

\begin{figure*}[!t]
\begin{equation}
\label{eq:FoxH_def}
H^{m,n}_{p,q}\!\left[z\; |\;
\begin{matrix}
(a_1,A_1),\ldots,(a_p,A_p)\\
(b_1,B_1),\ldots,(b_q,B_q)
\end{matrix}
\right]
= \frac{1}{2\pi i}\int_{\mathcal{L}}
\frac{\prod_{j=1}^{m}\Gamma(b_j+B_j s)\,
      \prod_{i=1}^{n}\Gamma(1-a_i-A_i s)}
     {\prod_{j=m+1}^{q}\Gamma(1-b_j-B_j s)\,
      \prod_{i=n+1}^{p}\Gamma(a_i+A_i s)}
\,z^{-s}ds.
\end{equation}
\hrulefill
\end{figure*}

Based on the above results, the closed-form Fox–$H$ representation for the PDF of the legitimate channel can be expressed as
\begin{align}
f_{\gamma_r}(\gamma)
&=\frac{\sigma_r^2}{P} f_W   (\tfrac{\sigma_r^2}{P}\gamma )\nonumber
 \\
&=\frac{\sigma_r^2 }{\Omega_{tr}P} 
H_{1,K+1}^{K+1,1}  \left[
\frac{\sigma_r^2\gamma \Omega_{tr}^{ K-1}}{P\prod_k\beta_k}
  | 
\begin{array}{c}
(1,1)\\[2pt]   \underbrace{(0,1),\ldots,(0,1)}_{K+1}
\end{array}
\right],
\label{eq:sinrPDF}
\end{align}
where the detailed proof for this theorem is presented in Appendix \ref{APPEN_LPDF}.
% \begin{theorem}\label{}
% \end{theorem}

% \begin{proof}
% \end{proof}

% Applying the inverse Hankel integral $f_{W}(w)=\int_{0}^{\infty}J_{0} (2\sqrt{wt} )
%             \Phi_{X}(t) \mathrm dt$, and evaluating the Mellin–Barnes contour integral yields the
% closed-form Fox–H representation as

% \begin{equation}\label{eq:FoxH_PDF}
% f_W(w)=
% \frac{ 1}{\Omega_{tr}} 
% H_{1,K+1}^{K+1,1} \left[
%    \frac{w \Omega_{tr}^{K-1}}{ \displaystyle\prod_{k=1}^{K} \beta_k}
%      | 
%    \begin{array}{c}
%      (1,1)\\[2pt]
%      \underbrace{(0,1),\ldots,(0,1)}_{K+1}
%    \end{array}
% \right].
% \end{equation}
% where its derivation is summarized in Appendix \ref{APPEN_LPDF}.

% For $\gamma_r = \dfrac{P}{\sigma_r^2}W$, the PDF of the legitimate channel can be calculated as:

% \begin{align}
% f_{\gamma_r}(\gamma)
% &=\frac{\sigma_r^2}{P} f_W   (\tfrac{\sigma_r^2}{P}\gamma )\nonumber
%  \\
% &=\frac{\sigma_r^2 }{\Omega_{tr}P} 
% H_{1,K+1}^{K+1,1}  \left[
% \frac{\sigma_r^2\gamma \Omega_{tr}^{ K-1}}{P\prod_k\beta_k}
%   | 
% \begin{array}{c}
% (1,1)\\[2pt]   \underbrace{(0,1),\ldots,(0,1)}_{K+1}
% \end{array}
% \right].
% \label{eq:sinrPDF}
% \end{align}

Then, the CDF of the legitimate channel can be obtained through:

\vspace{-0.3cm}
\begin{equation}
F_{\gamma_r}(\gamma)=\int_{0}^{\gamma} f_{\gamma_r}(t) dt .
\end{equation}

We use the standard Fox--$H$ integral identity \cite{mathai2009h}:

\begin{equation}
\int_{0}^{x} H_{p,q}^{m,n}   [a t ] dt
= x  H_{p+1,q+1}^{m,n+1}  \left[
a x  | 
\begin{array}{c}
(0,1), (a_i,A_i)_{1:p}\\[2pt]
(b_j,B_j)_{1:q}, (0,1)
\end{array}
\right].
\end{equation}

Here $p=1, q=K+1, m=K+1, n=1$. Applying the integral identity, the CDF of the legitimate channel gives

\begin{equation}
F_{\gamma_r}(\gamma) \hspace{-2pt}
= \hspace{-2pt}
\frac{\sigma_r^2\gamma}{\Omega_{tr}P} 
H_{2,K+2}^{ K+1, 2} \hspace{-2pt} \left[
 \frac{\sigma_r^2\gamma \Omega_{tr}^{ K-1}}{P\prod_k\beta_k}   | 
\begin{array}{c}
(0,1), (1,1)\\
\underbrace{(0,1),\ldots,(0,1), (0,1)}_{K+2}
\end{array} \hspace{-4pt}
\right] \hspace{-2pt}.
\label{eq:grmmar}
\end{equation}

%%%%%%%%%%
\subsubsection{Eavesdropping Channel }\label{Eves_SISO}
Define auxiliary variable $S = P |h_{t,e}|^{2}, I  =  P\sum_{k=1}^{K} a_k^{2}|h_{t,k}|^{2}|h_{k,e}|^{2}$, and $N = \sigma_e^{2}$,  \eqref{eq:gamma_Eve_vec_SISO} now becomes

\vspace{-0.3cm}
\begin{equation}
\gamma_e=\frac{S}{N+I}.
\end{equation}

Under independent Rayleigh fading of $S$ and $\sigma_e^{2}+I$, the PDF and the CDF of $\gamma_e$ can be calculated through a conditional expectation. 
Specifically, for Rayleigh fading $h_{i,j}$, $|h_{i,j}|^{2}$ is exponentially distributed with mean $\Omega_{i,j}=\mathbb{E}[|h_{i,j}|^2]$. Consequently, $S$ is exponential with mean $P\Omega_{t,e}$.

Conditioned on $I$, the CDF of $\gamma_e$ can be calculated though

\begin{align}
F_{\gamma_e|I}(x)
  &= \Pr  \left(\frac{S}{N+I}\le x  \Big|  I\right)
   = \Pr  \left(S \le x(N+I) \Big|  I\right) \notag\\
  &= 1-\exp  \left(-\frac{x(N+I)}{P\Omega_{t,e}}\right).
\end{align}

Averaging over $I$, the CDF of the eavesdropping channel can be calculated as
\begin{equation}\label{eq:CDF_generic}
F_{\gamma_e}(x)
= 1-\exp  \left(-\frac{x\sigma_e^2}{P\Omega_{t,e}}\right) 
\mathcal{L}_I  \left(\frac{x}{P\Omega_{t,e}}\right),
\end{equation}
where $\mathcal{L}_I(s)=\mathbb{E}  \left[e^{-sI}\right]$ is the Laplace transform of $I$.

Now we consider a single term $W_k = |h_{t,k}|^{2}|h_{k,e}|^{2}=X_kY_k$. Since
$X_k\sim\text{Exp}(1/\Omega_{t,k})$ and
$Y_k\sim\text{Exp}(1/\Omega_{k,e})$ are independent, $\mathcal{L}_{W_k}$ can be calculated though a Laplace transform:

\begin{align}
\mathcal{L}_{W_k}(s)
&=\int_{0}^{\infty} e^{-sw} f_{W_k}(w) dw \notag\\
&=\frac{1}{s\Omega_{t,k}\Omega_{k,e}}
   \exp  \left(\frac{1}{s\Omega_{t,k}\Omega_{k,e}}\right)
   E_1  \left(\frac{1}{s\Omega_{t,k}\Omega_{k,e}}\right),
\label{eq:LWk}
\end{align}
with $E_1(z)=\displaystyle\int_{z}^{\infty}\frac{e^{-t}}{t} dt$ denotes the exponential integral. Since $I=P\sum_{k=1}^{K} a_k^{2}W_k$, $\mathcal{L}_I(s)$ can be calculated in \eqref{eq:LI_final}.
% \begin{equation}\label{eq:LI_final}
% \mathcal{L}_I(s)=\prod_{k=1}^{K}
% \frac{1}{s P a_k^{2}\Omega_{t,k}\Omega_{k,e}}
% \exp  \left(\frac{1}{s P a_k^{2}\Omega_{t,k}\Omega_{k,e}}\right)
% E_1  \left(\frac{1}{s P a_k^{2}\Omega_{t,k}\Omega_{k,e}}\right).
% \end{equation}

\begin{figure*}[t]
			\normalsize
\begin{equation}
\mathcal{L}_I(s)=\prod_{k=1}^{K} \mathcal{L}_{W_k}(s P a_k^{2})=\prod_{k=1}^{K}
\frac{1}{s P a_k^{2}\Omega_{t,k}\Omega_{k,e}}
\exp  \left(\frac{1}{s P a_k^{2}\Omega_{t,k}\Omega_{k,e}}\right)
E_1  \left(\frac{1}{s P a_k^{2}\Omega_{t,k}\Omega_{k,e}}\right).
\label{eq:LI_final}
\end{equation}
\hrulefill
\vspace{-8pt}
\end{figure*}

Substitute $s=\dfrac{x}{P\Omega_{t,e}}$ from \eqref{eq:CDF_generic} into \eqref{eq:LI_final}, the CDF of the eavesdropping channel can be obtained in \eqref{eq:CDF_closed}.

\begin{figure*}[t]
			\normalsize
\begin{equation}
F_{\gamma_e}(x)
=1-\exp  \left(-\frac{x\sigma_e^2}{P\Omega_{t,e}}\right)
\prod_{k=1}^{K}
\frac{
\exp  \left(\dfrac{\Omega_{t,e}}{x P a_k^{2}\Omega_{t,k}\Omega_{k,e}}\right)
E_1  \left(\dfrac{\Omega_{t,e}}{x P a_k^{2}\Omega_{t,k}\Omega_{k,e}}\right)
}{
\dfrac{x P a_k^{2}\Omega_{t,k}\Omega_{k,e}}{\Omega_{t,e}}
}.
\label{eq:CDF_closed}
\end{equation}
\hrulefill
\vspace{-8pt}
\end{figure*}

% \begin{equation}\label{eq:CDF_closed}
% F_{\gamma_e}(x)
% =1-\exp  \left(-\frac{x\sigma_e^2}{P\Omega_{t,e}}\right)
% \prod_{k=1}^{K}
% \frac{
% \exp  \left(\dfrac{\Omega_{t,e}}{x P a_k^{2}\Omega_{t,k}\Omega_{k,e}}\right)
% E_1  \left(\dfrac{\Omega_{t,e}}{x P a_k^{2}\Omega_{t,k}\Omega_{k,e}}\right)
% }{
% \dfrac{x P a_k^{2}\Omega_{t,k}\Omega_{k,e}}{\Omega_{t,e}}
% }.
% \end{equation}

Differentiating the CDF, the PDF of the eavesdropping channel yields

\vspace{-0.3cm}
\begin{align}
f_{\gamma_e}(x)= 
&\lambda 
 e^{-\lambda\sigma_e^{2}x} 
 \mathcal L_I(\lambda x)
\nonumber \\[2pt]
&\times 
 \Bigl[
   \sigma_{e}^{2}
   +\frac{P\Omega_{t,e}}{x}
     \sum_{k=1}^{K} 
       \Bigl(
         1+u_k(x)
         -\tfrac{e^{-u_k(x)}}{E_1 \bigl(u_k(x)\bigr)}
       \Bigr)
 \Bigr],  \label{eq: EPDF}
\end{align}
where $\lambda    = \frac{1}{P\Omega_{t,e}},
u_k(x) = \frac{P\Omega_{t,e}}{x A_k}$, and the derivation is summarized in Appendix \ref{APPEN_EPDF}.

%====================================================================
\vspace{-0.1cm}
\subsection{Closed-form Secrecy Outage Probability}\label{sec:SOP}
%====================================================================

% \subsubsection{Definition}
With instantaneous SNRs \(\gamma_r\) and
\(\gamma_e\), the instantaneous secrecy capacity (SC) is

\begin{equation}
  C_s=[\log_2(1+\gamma_r)-\log_2(1+\gamma_e)]^{+}.
\end{equation}

As an information-theoretical metric to evaluate the PLS performance, SOP is defined as the probability that SC is less than a target secrecy rate threshold \(R_s   [{\rm bit/s/Hz}]\). Define the rate threshold $\Theta   =  2^{R_s} > 1$, the secrecy outage occurs whenever \(C_s<R_s\), i.e.,

\begin{equation}\label{eq:SOP_def}
  P_{\rm SOP}  = 
  \Pr    [\gamma_r<\Theta(1+\gamma_e)-1 ].
\end{equation}

Using the CDF of \(\gamma_r\) and the PDF of \(\gamma_e\), the SOP expression is given in the following theorem.

% \eqref{eq:SOP_def} 

\begin{theorem}\label{}
The Fox-H SOP for the AmbShield is

\begin{equation}
P_{\mathrm{SOP}}
= \frac{c}{\Omega_{t,r}  \sigma_e^2}
 H_{2K+3, 0}^{0, 2K+3}
 [
\frac{d}{A_r c} \sigma_e^{-2}
 |(0,1), \mathcal{A}
 ],
\label{eq:SOP_final_corrected}
\end{equation}
where $ A_{r}
= \frac{\sigma_{r}^{2} \Omega_{t,r}^{ K-1}}
{P\prod_{k=1}^{K}\beta_{k}}, 
c = \Theta - 1, 
d = \frac{\Theta}{c}, 
\lambda = \frac{1}{P \Omega_{t,e}}, 
\mathcal{A} = [(0,1)^{K}, (0,\tfrac12)^{K}, (0,1), (1,1)] $.

\end{theorem}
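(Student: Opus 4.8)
The plan is to start from the definition \eqref{eq:SOP_def}, $P_{\mathrm{SOP}} = \Pr[\gamma_r < \Theta(1+\gamma_e)-1]$, and condition on $\gamma_e$. Writing $c = \Theta - 1$ and $d = \Theta/c$, the event becomes $\gamma_r < c(1 + d\,\gamma_e)$, so that
\begin{equation}
P_{\mathrm{SOP}} = \int_0^\infty F_{\gamma_r}\!\bigl(c + cd\,x\bigr)\, f_{\gamma_e}(x)\,\mathrm{d}x.
\end{equation}
The first step is to substitute the Fox--$H$ form of $F_{\gamma_r}$ from \eqref{eq:grmmar} and the closed form of $f_{\gamma_e}$ from \eqref{eq: EPDF}. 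I would then represent the exponential and the exponential-integral factors in $f_{\gamma_e}$ (and in $\mathcal{L}_I$) as Mellin--Barnes / Fox--$H$ kernels: $e^{-t}$ is $H^{1,0}_{0,1}[t\,|\,{-};(0,1)]$, and the product $e^{z}E_1(z)$ admits a Fox--$H$ representation via its Mellin transform, contributing the $(0,\tfrac12)$-type parameters (these arise from the duplication/$K_0$-style structure already visible in the $\Phi_{V_k}$ computation). This is why the parameter vector $\mathcal{A}$ collects $K$ factors of $(0,1)$ (one per BD interference term in $\mathcal{L}_I$), $K$ factors of $(0,\tfrac12)$, plus the $(0,1)$ and $(1,1)$ inherited from the $F_{\gamma_r}$ kernel.

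The second step is to interchange the $x$-integral with the Mellin--Barnes contour integrals (justified by absolute convergence on the chosen contours, as is standard for Fox--$H$ manipulations — cf. \cite{mathai2009h}) and evaluate the resulting elementary $x$-integral. After collecting the $\Theta$-dependent prefactor $c/(\Omega_{t,r}\sigma_e^2)$ and the argument scaling $\tfrac{d}{A_r c}\sigma_e^{-2}$, the remaining multi-fold Mellin--Barnes integral should collapse, by the definition \eqref{eq:FoxH_def}, into a single Fox--$H$ function of order $H^{0,2K+3}_{2K+3,0}$: the $2K+3$ upper Gamma factors are exactly the $K+K+1+1 = 2K+2$ parameters from $\mathcal{A}$ together with the extra $(0,1)$ contributed by the CDF integration identity, and $m=0$, $q=0$ reflects that there are no denominator Gammas and no "second family" of numerator Gammas once everything is consolidated. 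I would verify the order count and the argument by tracking each Gamma factor through the substitution, and double-check consistency by confirming that setting $K=0$ (no BDs) reduces \eqref{eq:SOP_final_corrected} to the known Rayleigh-vs-Rayleigh SOP.

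The main obstacle I anticipate is the bookkeeping for the $e^{z}E_1(z)$ factors: getting the correct Mellin transform and, in particular, the correct $A_i$ exponents ($\tfrac12$ versus $1$) and the sign/orientation of each contour, so that the poles line up to give a valid $H^{0,2K+3}_{2K+3,0}$ rather than a function that fails the Fox--$H$ existence conditions ($\sum B_j - \sum A_i$ sign, argument sector). A secondary subtlety is that $f_{\gamma_e}$ in \eqref{eq: EPDF} contains the awkward ratio $e^{-u_k(x)}/E_1(u_k(x))$ inside the bracketed sum; I expect that when this is combined with $\mathcal{L}_I(\lambda x)$ the $E_1$ in the denominator cancels against the $E_1$ in $\mathcal{L}_{W_k}$, leaving only well-behaved $e^{\pm}E_1$ products — making this cancellation explicit is the step most likely to hide an error, so I would carry it out carefully before invoking any Fox--$H$ identity. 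The remainder is routine reduction using the integral identity for $\int_0^x H[\cdot]\,\mathrm{d}t$ already quoted above and the Mellin--Barnes definition \eqref{eq:FoxH_def}.
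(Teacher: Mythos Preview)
Your overall architecture (condition on $\gamma_e$, substitute the Fox--$H$ CDF from \eqref{eq:grmmar}, open everything via Mellin--Barnes, interchange, collapse) matches the paper's. But two concrete pieces of your plan diverge from what actually makes the proof go through.

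First, your attribution of the parameters in $\mathcal{A}$ is off. In the paper the entire block $\mathcal{A}=[(0,1)^K,(0,\tfrac12)^K,(0,1),(1,1)]$ comes from the \emph{eavesdropper} side alone: the key step you are missing is the compaction
\[
f_{\gamma_e}(x)=\lambda\,e^{-\lambda\sigma_e^2 x}\,H^{0,2K+2}_{2K+2,0}\bigl[\lambda x\,\bigm|\,\mathcal{A}\bigr],
\]
which absorbs the entire bracketed sum in \eqref{eq: EPDF} (including the awkward $e^{-u_k}/E_1(u_k)$ ratio you flagged) into a single Fox--$H$. The trailing $(0,1)$ and $(1,1)$ in $\mathcal{A}$ are not inherited from $F_{\gamma_r}$; the lone extra $(0,1)$ in front of $\mathcal{A}$ in \eqref{eq:SOP_final_corrected} is what the legitimate-link side ultimately contributes after the convolution. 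If you try instead to carry the raw $E_1$ factors through and rely on term-by-term cancellation, you will end up with a sum of Fox--$H$ pieces rather than one, and the bookkeeping you are worried about becomes the whole proof.

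Second, the collapse is not achieved ``by the definition \eqref{eq:FoxH_def}'' alone. After inserting the compacted $f_{\gamma_e}$, the paper opens the $F_{\gamma_r}$ Fox--$H$ as an $s$-contour, then separately expands $(1+dx)^{1-s}$ by a second MB identity (a $u$-contour), and evaluates the remaining $x$-integral against $e^{-\lambda\sigma_e^2 x}H^{0,2K+2}_{2K+2,0}[\lambda x]$ in closed form. The resulting \emph{nested} $(s,u)$ Mellin--Barnes integrals are then reduced to a single Fox--$H$ by two applications of the Kilbas--Saigo convolution theorem (see \cite{mathai2009h}), not by direct pattern-matching to \eqref{eq:FoxH_def}. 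Your proposal does not mention either the $(1+dx)^{1-s}$ expansion or the convolution theorem, and without them the ``multi-fold'' integral will not collapse cleanly.
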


\begin{proof}
The detailed proof of is in Appendix \ref{APPEN_SOP}.

\end{proof}

%%%%%%%%

%--------------------------------------------------------------------
%  High–SNR Asymptotic SOP of Multi‑BD Ambient‑Backscatter System
%  (English LaTeX version – ready to paste into your manuscript)
%--------------------------------------------------------------------
\vspace{-0.7cm}
\subsection{High‑SNR Asymptotic Analysis of the SOP}
To complement the exact SOP analysis, we further investigate the high-SNR asymptotics. This approach not only simplifies the performance characterization, but also reveals the fundamental scaling behavior and limits, providing clear insights into AmbShield’s secrecy performance under large transmit power. The asymptotic SOP is derived in the following theorm.

\label{sec:Asymptotic_SOP}

\begin{theorem} \label{corollary1}
The asymptotic SOP for AmbShield is given by \eqref{Asm_finnal}.

\begin{figure*}[t]
			\normalsize
\begin{equation}
P^{\rm Asy}_{\mathrm{SOP}}
=\frac{\sigma_r^{2}}{P \Omega_{t,r}} 
   [\Theta-1+\Theta P\Omega_{t,e} 
  H^{0, 2K+2}_{ 2K+2, 0} 
  \left[
     \sigma_e^{-2}
       | 
     (0,1)^{K}, (0,\tfrac12)^{K}, (0,1), (1,1)
  \right] ]
    F_D^{(K)} (
      1; \underbrace{1,\ldots,1}_{K};
      K+1; -\frac{\Omega_{t,r}}{\beta_1},
              \ldots,
              -\frac{\Omega_{t,r}}{\beta_K} ),
\label{Asm_finnal}
\end{equation}
\hrulefill
\vspace{-8pt}
\end{figure*}
\end{theorem}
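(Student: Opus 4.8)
The plan is to start from the exact SOP integral representation (before it was collapsed into the single Fox–$H$ form of Theorem~1) and take the limit $P\to\infty$ inside the integral. Recall that $P_{\rm SOP}=\Pr[\gamma_r<\Theta(1+\gamma_e)-1]$, which I would write as $\int_0^\infty F_{\gamma_r}\!\big(\Theta(1+x)-1\big)\,f_{\gamma_e}(x)\,dx$. The key structural fact is that $F_{\gamma_r}$ in \eqref{eq:grmmar} scales linearly in its argument as that argument $\to 0$: from the Fox–$H$ series (or equivalently from $\Phi_X(t)$ in \eqref{eq:PhiX}, whose $t\to\infty$ tail governs the small-$\gamma$ behaviour of the density), one has $F_{\gamma_r}(y)\sim \big(\sigma_r^2/(\Omega_{t,r}P)\big)\,y\cdot F_D^{(K)}\!\big(1;1,\ldots,1;K+1;-\Omega_{t,r}/\beta_1,\ldots,-\Omega_{t,r}/\beta_K\big)$ as $y\to 0$. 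So first I would establish this leading-order linearization of $F_{\gamma_r}$, identifying the Lauricella $F_D^{(K)}$ factor as the residue/series coefficient produced by the $K+1$ coincident poles at $s=0$ together with the $e^{-\Omega_{t,r}t}$ factor in $\Phi_X$.

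Second, I would substitute $y=\Theta(1+x)-1 = (\Theta-1) + \Theta x$ into that linearization, giving $F_{\gamma_r}(\Theta(1+x)-1)\approx \frac{\sigma_r^2}{\Omega_{t,r}P}\,\big[(\Theta-1)+\Theta x\big]\cdot F_D^{(K)}(\cdots)$, and then integrate against $f_{\gamma_e}(x)$. This produces two moment-type integrals: $\int_0^\infty f_{\gamma_e}(x)\,dx = 1$ and $\int_0^\infty x\, f_{\gamma_e}(x)\,dx = \mathbb{E}[\gamma_e]$. The first is trivial; the second is the mean of the eavesdropper SINR, which I would evaluate using the CDF/Laplace-transform machinery of Section~\ref{Eves_SISO} — writing $\mathbb{E}[\gamma_e]=\int_0^\infty (1-F_{\gamma_e}(x))\,dx$ and inserting \eqref{eq:CDF_closed}, the product of $\exp(\cdot)E_1(\cdot)$ terms collapses, via a Mellin–Barnes representation of the exponential-integral product, into the Fox–$H$ object $H^{0,2K+2}_{2K+2,0}[\sigma_e^{-2}\mid(0,1)^K,(0,\tfrac12)^K,(0,1),(1,1)]$ scaled by $P\Omega_{t,e}$. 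Collecting the $(\Theta-1)$ term and the $\Theta\,\mathbb{E}[\gamma_e]=\Theta P\Omega_{t,e}H^{0,2K+2}_{2K+2,0}[\cdots]$ term inside the bracket, and pulling out the common prefactor $\sigma_r^2/(P\Omega_{t,r})$ and the Lauricella factor, yields exactly \eqref{Asm_finnal}.

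**The hard part will be** rigorously justifying the interchange of limit and integral and, relatedly, confirming that no lower-order-in-$1/P$ correction to $F_{\gamma_r}$ contributes at the claimed leading order — i.e., that the next term in the small-argument expansion of $F_{\gamma_r}$ is genuinely $o(1/P)$ uniformly enough over the support of $f_{\gamma_e}$, and that $f_{\gamma_e}$ has a finite first moment so that $\mathbb{E}[\gamma_e]$ is well defined (this requires checking the tail of \eqref{eq: EPDF}, using the asymptotics $E_1(z)\sim -\ln z$ as $z\to 0$ and $E_1(z)\sim e^{-z}/z$ as $z\to\infty$, to show $1-F_{\gamma_e}(x)=O(x^{-2})$ or faster as $x\to\infty$). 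A secondary technical point is the bookkeeping of the Mellin–Barnes contour when converting the product of $\exp(u_k)E_1(u_k)$ factors in $\mathbb{E}[\gamma_e]$ into the single Fox–$H$ with the $2K+2$ parameter pairs $(0,1)^K$, $(0,\tfrac12)^K$, $(0,1)$, $(1,1)$; I would keep that derivation parallel to the one already done for the exact SOP in Appendix~\ref{APPEN_SOP} so that the parameter lists match by construction. Once finite-moment and uniform-domination are in place, dominated convergence closes the argument and \eqref{Asm_finnal} follows.
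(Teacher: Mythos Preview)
Your proposal is correct and follows essentially the same route as the paper's proof: linearize $F_{\gamma_r}(y)$ at $y\to 0$ (the paper does this via a first-order Taylor expansion of $F_W$ with slope $f_W(0)=\int_0^\infty e^{-\Omega_{t,r}t}\prod_k(1+\beta_k t)^{-1}\,dt$, then recognizes the Lauricella $F_D^{(K)}$), substitute $y=(\Theta-1)+\Theta x$, and reduce the remaining integral to $\Theta(1+\bar\gamma_e)-1$ with $\bar\gamma_e=P\Omega_{t,e}\int_0^\infty e^{-\sigma_e^2 u}\mathcal L_I(u)\,du$, which is exactly your $\int_0^\infty(1-F_{\gamma_e}(x))\,dx$ after the change of variables $u=\lambda x$. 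Your added care about dominated convergence and the finiteness of $\mathbb{E}[\gamma_e]$ is more rigorous than what the paper states explicitly, but the skeleton of the argument is the same.
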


\begin{proof}
The proof is elaborated in Appendix \ref{APPEN_Corollary1}.
\end{proof}

\vspace{-0.23cm}
\subsection{Secrecy Diversity Order}
While the high-SNR asymptotic analysis reveals the general scaling behavior of the SOP, it is equally important to quantify the rate at which secrecy performance improves with increasing transmit power. This is captured by the SDO, which serves as a fundamental metric to characterize the asymptotic slope of the SOP curve and thus the robustness of AmbShield against eavesdropping. The SDO is derived in the following theorm.

\begin{theorem} \label{corollary2}
The SDO is given by \eqref{Asm_finnal}.

\vspace{-0.3cm}
\end{theorem}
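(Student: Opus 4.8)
The plan is to extract the secrecy diversity order directly from the high-SNR asymptotic SOP in \eqref{Asm_finnal}, using the standard definition
\begin{equation}
G_d = -\lim_{\bar\gamma\to\infty}\frac{\log P^{\rm Asy}_{\rm SOP}}{\log\bar\gamma},
\end{equation}
where $\bar\gamma = P/\sigma_r^2$ is the legitimate transmit SNR. First I would observe that in \eqref{Asm_finnal} the entire dependence on $P$ is confined to the prefactor $\sigma_r^2/(P\Omega_{t,r})$ times the bracketed term $[\,\Theta-1+\Theta P\Omega_{t,e}\,H^{0,2K+2}_{2K+2,0}[\cdot]\,]$; the Lauricella $F_D^{(K)}$ factor and the Fox--$H$ argument $\sigma_e^{-2}$ carry no $P$-dependence and are finite positive constants for fixed channel statistics and target rate. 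So I would write $P^{\rm Asy}_{\rm SOP} = (c_1/P + c_2)\cdot c_3$ for appropriate $P$-independent constants, where $c_1 = \sigma_r^2(\Theta-1)/\Omega_{t,r}$ and $c_2 = \sigma_r^2\Theta\Omega_{t,e}H^{0,2K+2}_{2K+2,0}[\sigma_e^{-2}\mid\cdots]/\Omega_{t,r}$, and $c_3 = F_D^{(K)}(1;1,\ldots,1;K+1;-\Omega_{t,r}/\beta_1,\ldots,-\Omega_{t,r}/\beta_K)$.

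Next I would treat the scaling regime carefully. If $\sigma_e^2$ is held fixed while $P\to\infty$ (legitimate and eavesdropper noise powers fixed, transmit power growing), then $c_2$ is a nonzero constant, so $P^{\rm Asy}_{\rm SOP}\to c_2 c_3$, a positive constant, giving $G_d = 0$; that is not the claimed behavior. The physically meaningful high-SNR regime here must therefore be the one in which both links scale together, i.e. $\sigma_e^2 = \sigma_r^2$ (or more generally $\sigma_e^2 \propto \sigma_r^2$, equivalently $P/\sigma_e^2\to\infty$ at the same rate). In that regime I would expand the Fox--$H$ term $H^{0,2K+2}_{2K+2,0}[\sigma_e^{-2}\mid\cdots]$ for large argument $\sigma_e^{-2}\to\infty$: since this is a Fox--$H$ of type $(m,n)=(0,2K+2)$ with $q=0$, its large-argument behavior is governed by the residues/saddle contribution and it decays, so $P\Omega_{t,e}H^{0,2K+2}_{2K+2,0}[\cdot]$ needs to be compared against the constant $\Theta-1$. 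I would argue that in the combined-scaling regime the bracket is bounded above and below by positive constants (the $\Theta-1$ term dominates or is comparable), so the whole SOP behaves as $\Theta(1/P) = \Theta(1/\bar\gamma)$, and hence $G_d = 1$.

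The key steps in order: (i) state the SDO definition and identify $\bar\gamma$; (ii) substitute \eqref{Asm_finnal} and isolate the $P$-dependence into the factor $1/P$ times a bracket; (iii) argue the Lauricella $F_D^{(K)}$ factor is a finite positive constant (convergent hypergeometric series, all arguments in the domain of convergence since $\Omega_{t,r}/\beta_k>0$ and the series $F_D$ converges for arguments bounded away from $1$ from above — or invoke its integral representation), so it does not affect the exponent; (iv) in the matched-scaling regime $\sigma_e^2\propto\sigma_r^2$, show the bracket $[\Theta-1+\Theta P\Omega_{t,e}H^{0,2K+2}_{2K+2,0}[\sigma_e^{-2}\mid\cdots]]$ is $\Theta(1)$ — bounded between two positive constants — by noting $\Theta-1>0$ is a fixed positive lower bound and the Fox--$H$ contribution is nonnegative and vanishing (or at worst bounded) as the argument grows; (v) conclude $P^{\rm Asy}_{\rm SOP} = \Theta(\bar\gamma^{-1})$, whence $G_d = 1$.

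The main obstacle I anticipate is step (iv): rigorously controlling the product $P\,H^{0,2K+2}_{2K+2,0}[\sigma_e^{-2}\mid\cdots]$ as $P\to\infty$ with $\sigma_e^{-2}$ growing proportionally. One must show this product does not blow up (which would spoil the $1/P$ scaling) nor is it needed for the lower bound (the $\Theta-1$ term already secures that). The cleanest route is probably to bound the Fox--$H$ crudely: since $(m,n,p,q)=(0,2K+2,2K+2,0)$ the function admits, for large positive argument $z$, an expansion whose leading exponent is strictly positive (so $H\to 0$), hence $P\cdot H$ either tends to a finite limit or to $0$ depending on that exponent versus the linear growth of $P$ — in either case the bracket stays bounded, and combined with the fixed positive term $\Theta-1$ it is $\Theta(1)$. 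Making this expansion precise (citing the asymptotic theory of Fox--$H$ functions from \cite{mathai2009h}) is the technical crux; everything else is bookkeeping. I would therefore devote the bulk of the appendix proof to this large-argument Fox--$H$ estimate, then collect the pieces to read off $G_d=1$.
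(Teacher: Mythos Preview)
Your proposal has a genuine gap in step (iv), and it stems from a misreading of where the $P$-dependence lives in \eqref{Asm_finnal}. You treat the factor $P\,\Omega_{t,e}\,H^{0,2K+2}_{2K+2,0}[\sigma_e^{-2}\mid\cdots]$ as if its only $P$-dependence were the explicit prefactor $P$, and then conclude that for fixed $\sigma_e$ the bracket must blow up linearly, forcing $D_s=0$ unless you invent a matched-scaling regime with $\sigma_e\to 0$. But that whole factor is nothing other than $\bar\gamma_e$ (see \eqref{gamme} and, more importantly, the integral \eqref{2} it abbreviates), and $\bar\gamma_e$ carries hidden $P$-dependence through $\mathcal{L}_I$, since the interference $I=P\sum_k a_k^2|h_{t,k}|^2|h_{k,e}|^2$ itself scales with $P$. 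The compact Fox--$H$ notation obscures this; trying to extract the diversity order from the large-argument asymptotics of that Fox--$H$ is therefore attacking the wrong object, and your matched-scaling workaround is both unnecessary and not the regime the paper considers.

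The paper's argument is much shorter and bypasses the Fox--$H$ entirely. It returns to the pre-Fox--$H$ expression \eqref{eq:SOP-1st}, $P^{\rm Asy}_{\rm SOP}=\frac{\sigma_r^2}{P\Omega_{t,r}}\,[\Theta(1+\bar\gamma_e)-1]\,C_{\rm leg}$, and then observes directly from the eavesdropper SINR structure $\gamma_e = P|h_{t,e}|^2\big/\bigl(\sigma_e^2+P\sum_k a_k^2|h_{t,k}|^2|h_{k,e}|^2\bigr)$ that, as $P\to\infty$ with $\sigma_e^2$ fixed, dividing numerator and denominator by $P$ gives $\gamma_e\to |h_{t,e}|^2/\sum_k a_k^2|h_{t,k}|^2|h_{k,e}|^2$, a $P$-independent random variable with finite mean $\bar\gamma_e^{(\infty)}$. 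Hence the bracket converges to the constant $\Theta(1+\bar\gamma_e^{(\infty)})-1>0$, $C_{\rm leg}$ is already a $P$-free constant, and $P^{\rm Asy}_{\rm SOP}=c_0/P+o(1/P)$, yielding $D_s=1$ immediately. The physical point you missed is that the AmBD-generated interference at Eve grows with $P$ at the same rate as her desired signal, so $\bar\gamma_e$ saturates rather than diverges; once you see this, no Fox--$H$ asymptotics or noise rescaling are needed.
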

\begin{equation}
  D_s  =
  - \lim_{P\to\infty}\frac{\log P_{\mathrm{SOP}}^{\mathrm{Asy}}}{\log P } =1.
\end{equation}

\begin{proof}
The proof is elaborated in Appendix \ref{APPEN_Corollary2}.
\end{proof}

\textit{Remark 2} 
From Theorems \ref{corollary1} and \ref{corollary2}, we can find that: 1) Increasing transmit power can improve the received SNR at the destination, while the SINR at
the eavesdropper gets close to constant, thereby boosting secrecy capacity and strengthening security; 2) When the number of BDs $K$ increases, the received SNR at the destination and the interference at the eavesdropper can be heightened due to the enhancement of the backscattered signal, and thus secrecy performance becomes enhanced; 3) The system can enjoy secrecy diversity gain owing to secrecy diversity order is 1.

%From Corollaries~\ref{cor1}, \ref{cor2} and Theorems~\ref{thm1}, \ref{thm2}, we can draw the following observations: 1 Increasing transmit power can improve the received SNR at the destination, while the SINR at the eavesdropper approaches a constant, thereby boosting secrecy capacity and strengthening security. 2 When the reflection efficiency increases, both the received SNR at the destination and the interference at the eavesdropper are enhanced—due to the stronger backscattered signal—thus improving the overall secrecy performance. 3 The system enjoys a secrecy diversity gain, since the secrecy diversity order is~1.

%\input{s4_secan}

\begin{table}[t]
\centering
\caption{Default System Parameters}
\label{tab:default_system_params}
\resizebox{\linewidth}{!}{
\begin{tabular}{l l}
\hline
\textbf{System parameters} & \textbf{Value} \\
\hline
Transmit Power ($P_t$)                         & 20 dBm \\
Number of BDs ($K$)                      &20              \\
The distance between transmitter and Eve (\(d_{t,e}\))   & \(10\) $\mathrm{m}$ \\
The distance between $k$-th BD and Eve (\(d_{k,e}\))   & \(2\) $\mathrm{m}$ \\
The distance between transmitter and $k$-th BD (\(d_{t,k}\))   & $\mathrm{9.5}$ $\mathrm{m}$\\
   The distance between transmitter and receiver   (\(d_{t,r}\))  & \(10\) $\mathrm{m}$ \\
        The activity range of $k$-th BD around the receiver (\(d_{k.r}\))& 0.5 $\mathrm{m}$ \\
Noise power received at Eve and receiver  (\(\sigma_e, \sigma_r\))  & -\( 30\, dBm\) \\ 
      Backscatter coefficient   ($\alpha_k$)& 0.8\\ 
      Path loss factor ($\alpha$)   & 2.5\\
      Normalized timing error $T_s$   & 0-0.3\\
CSI estimation error $\epsilon_{csi}$   & 0-0.4\\
    Channel fading    ($\tilde h_{ij}$)& Rayleigh fading\\ 
      Target secrecy rate threshold $R_S$   & 1 bps/Hz\\

\hline
\end{tabular}
}
\vspace{-0.5cm}
\end{table}

\section{Performance Evaluation} \label{performance}

In this section, we describe the experimental setup and evaluate the proposed AmbShield system regarding the metrics analyzed, i.e., CDF and SOP. Specifically, we first present the CDF to analyze the effectiveness of the proposed system. Then, we demonstrate the SOP of the proposed system under various practical parameters to reveal its potential in the future. 

\subsection{Simulation Setup}
We consider a downlink wireless communication system consisting of a transmitter, a receiver, an eavesdropper, and multiple AmBDs. Following the system model described in Section \ref{System model}, the transmitter emits a downlink signal intended for the receiver, while an eavesdropper simultaneously intercepts this transmission in an attempt to extract confidential information \cite{10337761}.
The AmBDs, distributed close to the receiver, randomly reflect the incident downlink signals. The legitimate receiver possesses prior knowledge of these reflection patterns, enabling effective decoding of reflected signals \cite{10109228}. 
In contrast, the eavesdropper lacks this knowledge, rendering it incapable of decoding the reflected signals \cite{10599122}.
The proposed system is also applicable to the uplink transmission with reserved roles of the transmitter and the receiver.
To validate the effectiveness of security and theoretical analyses presented in Section \ref{SNR Distribution}, Monte Carlo simulations (``\textbf{Simulation}''), theoretical curves (``\textbf{Theory}''), and asymptotic results (``\textbf{Asymptotic}''). are employed to evaluate the performance of the proposed system under various parameters. We adopt practical system parameters from existing studies \cite{10337761,10271138} in the related scenario, which are summarized in Table \ref{tab:default_system_params}.

\vspace{-0.5cm}
\subsection{Simulation Results}

\begin{figure}[!tb]
    \centering
    %\vspace{-0.2cm}
    \subfigure[Legitimate Receiver]{
        \includegraphics[width=0.47\columnwidth]{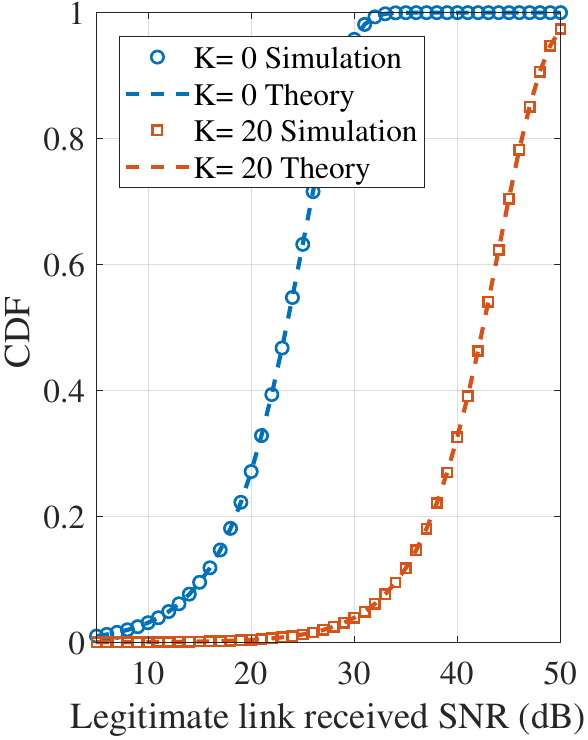}
        \label{CDF1}
    }
    \hspace{-6mm}
    \subfigure[Eve]{
        \includegraphics[width=0.47\columnwidth]{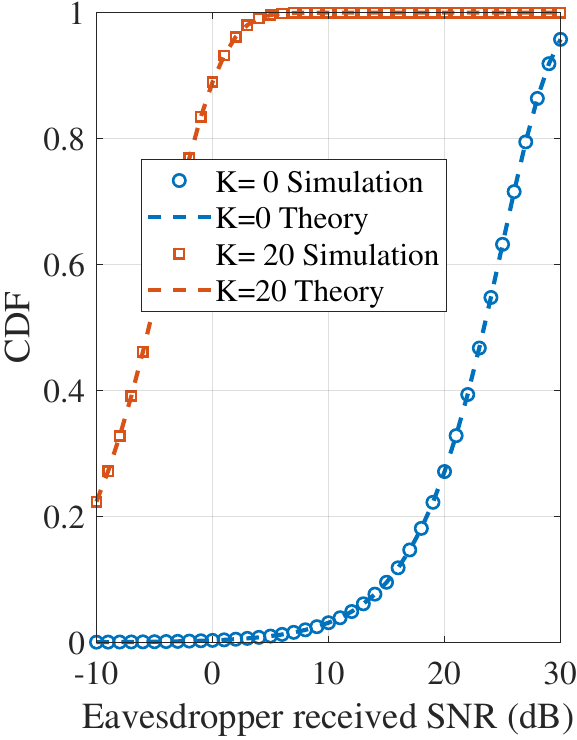}
        \label{CDF2}
    }
    %\vspace{-2mm}
    \caption{The CDF of the received SNRs at the legitimate receiver and the Eve, respectively.
}
    \label{CDF}
\end{figure}

Figures \ref{CDF1} and \ref{CDF2} illustrate the CDF of the received SNR at the legitimate receiver and the Eve, respectively. The results compare scenarios without an AmBD (that is, $K$ = 0) and scenarios with multiple AmBDs.
It could be found that both the analytical CDF of the received SNRs at the legitimate user and Eve match well with the numerical curves.
Furthermore, in Figure \ref{CDF1}, we observe that the legitimate receiver experiences a significant improvement in the received SNR distribution when there are multiple AmBDs. Specifically, the CDF curves shift to the right, indicating that higher SNR values become more probable, thus demonstrating the positive impact of AmBDs in enhancing legitimate communication quality.
In contrast, Figure \ref{CDF2} reveals the opposite effect on the eavesdropper side. The presence of multiple AmBDs causes the received SNR at the eavesdropper to deteriorate, shifting the CDF curves to the left. This indicates that the eavesdropper experiences lower SNR values more frequently as a result of the interference created by the random reflections of the AmBDs, effectively degrading its interception capability.
Overall, the experimental results confirm that the incorporation of AmBDs effectively enhances communication security by simultaneously improving the quality of legitimate signals and impairing the eavesdropper's received signal. 
%In addition, through side-by-side comparisons of Figure \ref{CDF1} and Figure \ref{CDF2}, it can be found that the SNR value of the legitimate receiver is significantly higher than that of Eve when BDs are present. This reveals that 

\begin{figure}[tp]
    \centering
    \includegraphics[width=0.9\linewidth]{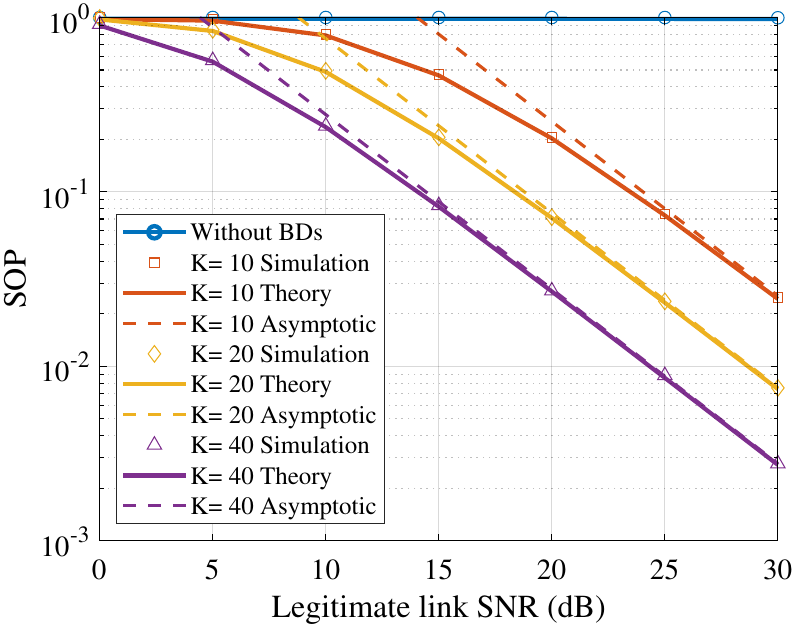}
    %\vspace{-0.1cm}
    \caption{The SOP versus SNR of legitimate Channel under different BD numbers}
    \label{SOPSNR}
    \vspace{-0.5cm}
\end{figure}

Figure~\ref{SOPSNR} presents the relationship between the SOP and the legitimate link SNR, considering different numbers of BD. It is evident that the presence of AmBDs significantly enhances the system's security performance by lowering the SOP across all examined SNR values. Specifically, situations with a larger number of AmBDs exhibit progressively reduced SOP, highlighting that increased deployment of AmBDs provides improved resilience against eavesdropping. Furthermore, the strong alignment observed among the Monte Carlo simulations, theoretical curves, and asymptotic analyses underscores the validity and reliability of the proposed theoretical model. This analysis confirms that incorporating additional AmBD can effectively mitigate security risks, substantially improving the performance of the system's secrecy.

\begin{figure}[tp]
    \centering
    \includegraphics[width=0.9\linewidth]{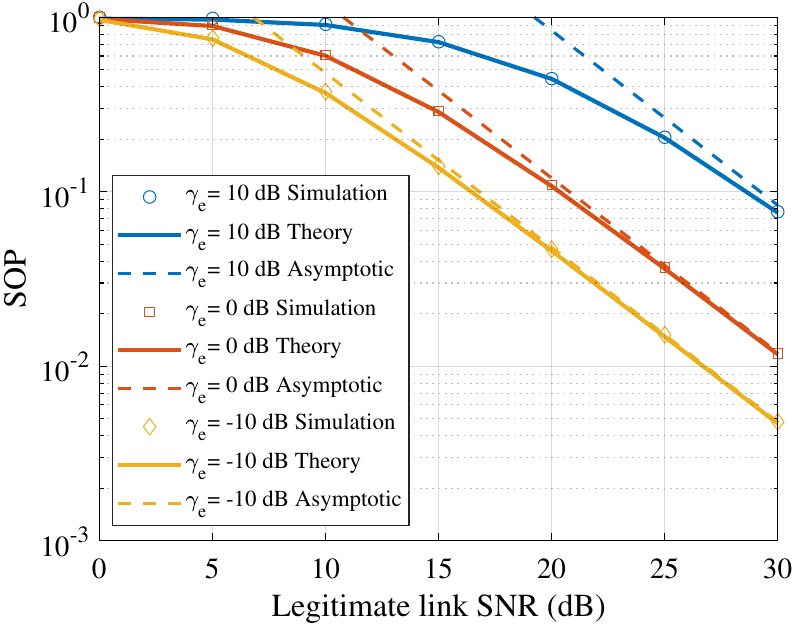}
    %\vspace{-0.1cm}
    \caption{The SOP versus SNR of legitimate Channel under different $\bar\gamma_e$}
    \label{SOPEVESNR}
    %\vspace{-0.5cm}
\end{figure}

Figure~\ref{SOPEVESNR} demonstrates how varying the average SNR in the eavesdropper (${\gamma}_e$) impacts the SOP. It is evident that a lower average SNR for eavesdroppers significantly improves secrecy performance. Specifically, scenarios where ${\gamma}_e$ is reduced (e.g., from 10 to -10 dB) exhibit consistently lower SOP values throughout the SNR range of the legitimate link. This trend highlights that a weaker eavesdropping channel substantially hinders the eavesdropper's ability to intercept confidential transmissions, thereby enhancing overall system security. Furthermore, the close alignment between the simulation, theoretical, and asymptotic results confirms the validity and precision of the analytical framework used.

\begin{figure}[tp]
    \centering
    \includegraphics[width=0.9\linewidth]{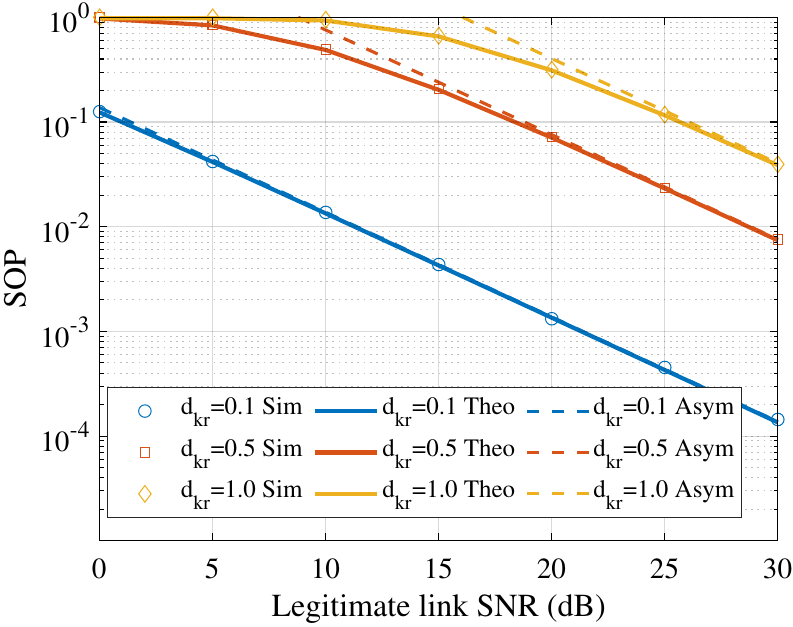}
    %\vspace{-0.1cm}
    \caption{The SOP versus SNR of legitimate Channel under different $d_{k,r}$}
    \label{SOPDis}
    \vspace{-0.5cm}
\end{figure}

Figure~\ref{SOPDis} examines the impact of varying the distance between BD and the legitimate receiver ($d_{k,r}$) on the SOP. The results clearly demonstrate that decreasing the distance between the AmBDs and the legitimate receiver significantly enhances the performance of secrecy. Specifically, scenarios with $d_{k,r} = 0.1$ m show substantially lower SOP values compared to those with distances of 0.5 m and 1.0 m, across the entire SNR range of legitimate links. Since AmBDs are typically located close to the receiver (less than 0.5 m), this demonstrates the practicality of employing AmBDs to improve security and highlights the importance of positioning BDs for performance optimization.

\begin{figure}[tp]
    \centering
    \includegraphics[width=0.9\linewidth]{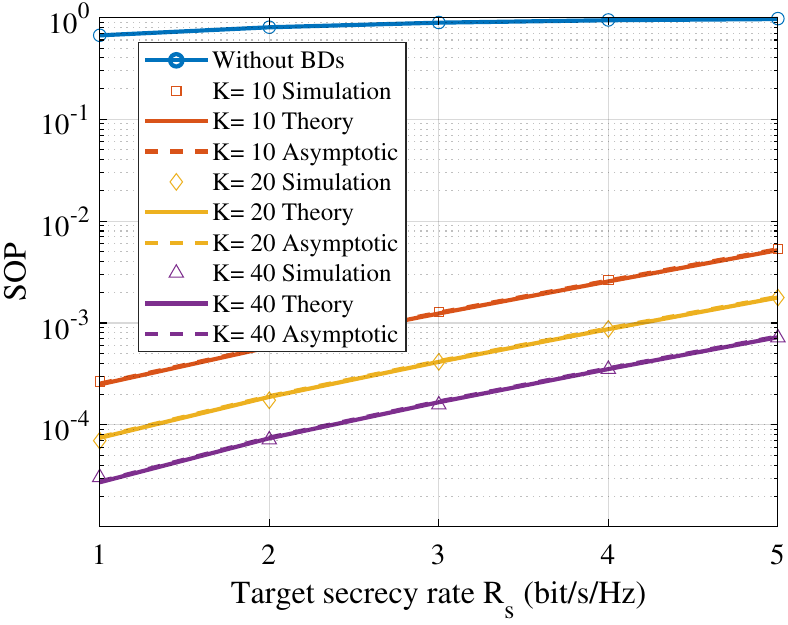}
    %\vspace{-0.1cm}
    \caption{The SOP versus $R_s$ under different AmBDs}
    \label{SOPRs}
    \vspace{-0.4cm}
\end{figure}

Figure~\ref{SOPRs} evaluates the SOP as a function of varying target secrecy rates ($R_s$). As observed, increasing the target secrecy rate leads to higher SOP values due to the heightened security requirements. However, the presence of AmBDs significantly reduces this adverse impact. Specifically, scenarios with 10, 20, and 40 AmBDs consistently achieve lower SOP across all secrecy rate thresholds compared to scenarios without BD. This clearly illustrates the efficacy of deploying multiple AmBDs to mitigate performance degradation and ensure robust system security even under demanding secrecy conditions. The close alignment between simulation, theoretical, and asymptotic results further validates the accuracy and practical relevance of the theoretical analysis presented.

\begin{figure}[tp]
    \centering
    \includegraphics[width=0.9\linewidth]{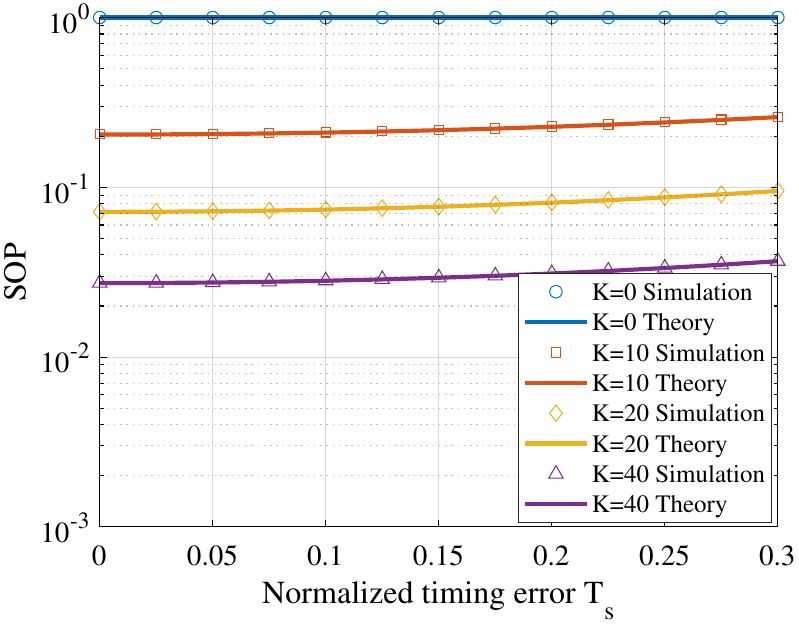}
    %\vspace{-0.1cm}
    \caption{The SOP versus normalized timing error under different AmBDs when SNR = 20dB}
    \label{timeerror}
    \vspace{-0.5cm}
\end{figure}

\begin{figure}[tp]
    \centering
    \includegraphics[width=0.9\linewidth]{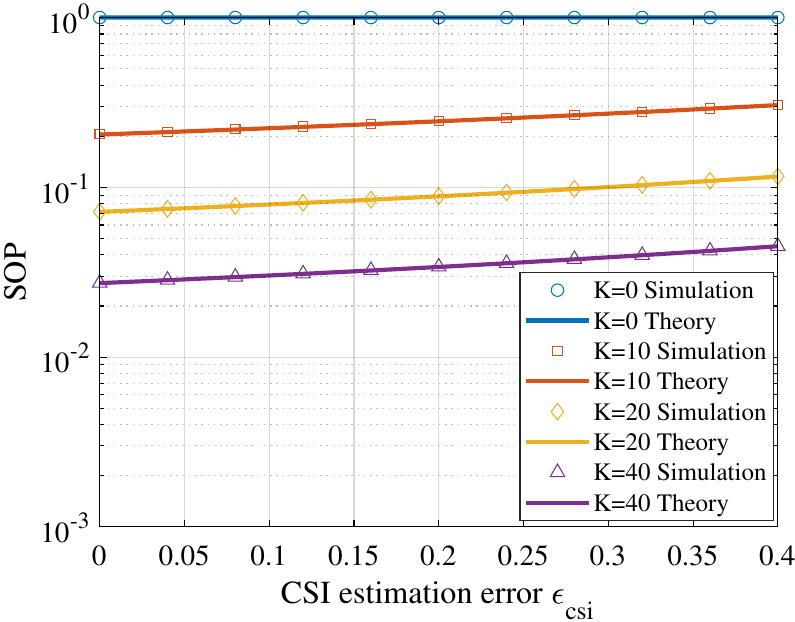}
    %\vspace{-0.1cm}
    \caption{The SOP versus normalized mean squared estimation error under different AmBDs when SNR = 20dB}
    \label{CSIerror}
    \vspace{-0.5cm}
\end{figure}
Figure~\ref{timeerror}  investigates the SOP as the normalized time synchronization error between the AmBD and the legitimate receiver $T_s$ increases, for different numbers of AmBD. The curves show a clear monotonic increase in SOP with $T_s$, reflecting the degradation of accuracy in the legitimate receiver as the BD-assisted paths lose temporal alignment. In addition,
the effect of the synchronization error on the SOP is limited, which suggests that the modest synchronization error can be tolerated in AmbShield.
%, and placing/activating more nearby BDs provides additional robustness to residual synchronization error

Figure~\ref{CSIerror} evaluates the SOP under increasing CSI estimation error of the legitimate channel. As the estimation error grows, the SOP increases because the effective SNR of the legitimate link is reduced by estimation inaccuracies, tightening the secrecy margin against the eavesdropper. Nevertheless, scenarios with more AmBDs consistently achieve lower SOP throughout the estimation error range, demonstrating that AmBD-assisted diversity and constructive relaying can compensate for imperfect CSI to a large extent. Together with Fig. 8, these results reinforce that AmbShield remains effective under practical impairments, synchronization, and CSI estimation errors.

%%1.   The PDF  of the received SNRs at the legitimate user D and theeavesdroppers E, respectively.

%%2.  The CDF of the received SNRs at the legitimate user D and the eavesdroppers E, respectively.

%%3. SOP vs SNR under BD number  done

%%3. SOP vs SNR under EVe SNR done

%%3. SOP vs R_s under BD number done 

%%4. SOP vs SNR under BD distance number

%\input{s6_discuss}
\vspace{-0.3cm}
\section{Conclusion} \label{conclusion}
In this paper, we have proposed AmbShield, an AmBD-assisted PLS framework that exploits naturally distributed AmBDs to jointly enhance the legitimate channel and degrade eavesdroppers, without extra transmit power, costly hardware, and eavesdropper CSI. We have established a practical system model in which the receiver benefits from known BD reflections while the eavesdropper is impaired by random backscatter, and derived exact SINR distributions that led to closed-form SOP expressions and high-SNR asymptotics, revealing a secrecy diversity order of one.
Our analysis and simulations have consistently shown that secrecy improves by deploying more AmBDs, positioning them closer to the receiver, and reducing the average SNR of the eavesdropper. In practical settings, AmbShield achieves an SOP below $10^{-3}$, and the resilience of AmbShield to imperfect synchronization and CSI estimation is demonstrated.  These findings provide both theoretical insights and concrete design guidelines for integrating backscatter devices into future low-power secure wireless systems. 
%Finally, a thorough performance evaluation confirmed the analysis and yielded clear design prescriptions: increasing the number of BD, placing them closer to the receiver, and lowering the average SNR of the eavesdropper systematically improve secrecy. In practical settings, AmbShield achieves an SOP below $10^{-3}$. In addition, the resilience of AmbShield to imperfect synchronization and CSI estimation is demonstrated. 

% % ==============================================================
\appendices
% % ==============================================================
% \begin{equation}
% \end{equation}

\section{Derivation of the PDF of \(V_k = |\phi_k h_{k,r} h_{t,k}|^2\)} \label{Appendx: V_k}

Let $X = |h_{k,r}|^2 \sim \mathrm{Exp}  (\tfrac{1}{\Omega_{k,r}} ), Y = |h_{t,k}|^2 \sim \mathrm{Exp}  (\tfrac{1}{\Omega_{t,k}} )$, where \(X\) and \(Y\) are independent.  Define the intermediate variable $Z = X Y$ and $\beta_k = \phi_k^2 \Omega_{k,r} \Omega_{t,k}$.

% and note that
% \[
% V = \phi_k^2 X Y = \phi_k^2 Z,
%  
% \beta_k = \phi_k^2 \Omega_{k,r} \Omega_{t,k}.
% \]

% \subsection*{1. PDF of \(Z = XY\)}

For two independent nonnegative random variables \(X\) and \(Y\), the PDF of \(Z=XY\) is
\begin{equation}
f_Z(z)
= \int_{0}^{\infty}
    f_X(x)
    f_Y\left(\tfrac{z}{x}\right)
    \frac{1}{x} \mathrm{d}x .
\end{equation}

Since $f_X(x) = \frac{1}{\Omega_{k,r}} e^{-x/\Omega_{k,r}},
f_Y (\tfrac{z}{x} )
= \frac{1}{\Omega_{t,k}}
  \exp  (-\frac{z}{x \Omega_{t,k}} )$, we obtain

\begin{equation}
f_Z(z)
= \frac{1}{\Omega_{k,r} \Omega_{t,k}}
  \int_{0}^{\infty}
    x^{-1}
    \exp \left(
      -\tfrac{x}{\Omega_{k,r}}
      -\tfrac{z}{\Omega_{t,k}x}
    \right)
   \mathrm{d}x .
\end{equation}

According to Macdonald function \cite{gradshteyn2014table}, we have $\int_{0}^{\infty}
  x^{-1} 
  \exp  (-a x - b/x )
 \mathrm{d}x
= 2 K_{0}  (2\sqrt{ab} )$.
Here set $a = \frac{1}{\Omega_{k,r}}, b = \frac{z}{\Omega_{t,k}}$, so that
\begin{equation}
f_Z(z)
= \frac{2}{\Omega_{k,r} \Omega_{t,k}}
   K_{0} \left(
     2\sqrt{\tfrac{z}{\Omega_{k,r} \Omega_{t,k}}}
   \right).
\end{equation}

Since \(V = \phi_k^2 Z\), a change of variables give $f_V(v)
= \frac{1}{\phi_k^2} 
  f_Z  (\frac{v}{\phi_k^2} )$. Therefore, the final PDF is

\begin{align}
f_V(v)
=  \frac{2}{\phi_k^{2}\Omega_{k,r} \Omega_{t,k}}
    K_{0} \left(
      2\sqrt{\tfrac{v}{\phi_k^{2}\Omega_{k,r} \Omega_{t,k}}}
    \right) = \frac{2}{\beta_k}
    K_{0} \left(
      2\sqrt{\tfrac{v}{\beta_k}}
    \right).
\end{align}

\section{Derivation of the Fox--$H$ PDF for  in \eqref{eq:sinrPDF}}
\label{APPEN_LPDF}

To obtain \eqref{eq:sinrPDF}, we first use the multi-Laplace extension for each zero-order Hankel transform  of BD cascade link $V_{k}$:

\begin{equation}
\frac{1}{1+\beta_k t}
=\int_{0}^{\infty} e^{-v_k} e^{-\beta_k t v_k} dv_k .
\label{id:prodLap}
\end{equation}

Define auxiliary variable $W=\lvert X\rvert^{2}$, according to Hankel-Bessel nuclear integral formula, the PDF of $W$ can be calculated as \cite{watson1922treatise}

\begin{equation}
f_{W}(w)=\int_{0}^{\infty}J_{0} (2\sqrt{wt} )
            \Phi_{X}(t) \mathrm dt.
\label{nuclearintegralformula}
\end{equation}

Substitute \eqref{id:prodLap} into \eqref{nuclearintegralformula}, $f_{W}(w)$ can be further calculated as

\begin{align}
f_W(w) 
= & \int_{0}^{\infty} e^{-\sum v_k}
   \Biggl[
     \int_{0}^{\infty}
       e^{-(\Omega_{t,r}+\sum \beta_k v_k)t} 
       J_{0} \left(2\sqrt{w t}\right) 
       \mathrm{d}t
   \Biggr] \nonumber \\[4pt]
&\times \mathrm{d}\mathbf{v}.
\label{eq:innerInt}
\end{align}

For the Laplace transform of the zero-order Bessel kernel, we have 
$\int_{0}^{\infty} e^{-c t}
  J_0  (2\sqrt{w t} ) dt
=\frac{1}{c}\exp  (-\frac{w}{c} ),$ where $c=\Omega_{tr}+\sum\beta_k v_k$. Therefore, \eqref{eq:innerInt} can be further transformed to

\begin{equation}
f_W(w) =
 \int_{0}^{\infty}
   \frac{e^{-\sum v_k}}
        {\Omega_{t,r} + \sum \beta_k v_k} 
   \exp \left(
     -\tfrac{w}{\Omega_{t,r} + \sum \beta_k v_k}
   \right)
  \mathrm{d}\mathbf{v}.
\label{eq:postLap}
\end{equation}

To obtain the Fox-H expression, \eqref{eq:postLap} needs to be measured and scaled uniformly. We define $z=\frac{w}{\Omega_{tr}}, u_k=\beta_k v_k,$ and $\Sigma=\sum_{k=1}^{K}\frac{u_k}{\Omega_{tr}}$. Thus, $d\mathbf v = (\prod \beta_k^{-1}) d\mathbf u$ and $\Omega_{tr}+\sum\beta_k v_k = \Omega_{tr}(1+\Sigma)$, and \eqref{eq:postLap} then becomes 

\begin{equation}
f_W(w) =
 \frac{1}{\Omega_{t,r}}
 \left(\prod_{k=1}^{K} \beta_k^{-1}\right)
 \int_{0}^{\infty}
   \frac{e^{-\sum u_k/\beta_k}}
        {1+\Sigma} 
   \exp \left(
     -\tfrac{z}{1+\Sigma}
   \right)
  \mathrm{d}\mathbf{u}.
\label{eq:dimless}
\end{equation}

By using the standard double-Gamma Mellin--Barnes kernel $\frac{1}{1+\Sigma} 
\exp  (-\frac{z}{1+\Sigma} )
=\frac{1}{2\pi i}
 \int_{\mathcal L}
   \Gamma(-s)\Gamma(1+s) 
   z^{ s}
   (1+\Sigma)^{-1-s}
 ds$, \eqref{eq:dimless} can be reformatted as a Mellin–Barnes contour integral, shown in \eqref{eq:swapInt}.

\begin{figure*}[t]
			\normalsize
\begin{equation}
f_W(w) =
 \frac{1}{\Omega_{t,r}}
 \left(\prod_{k=1}^{K} \beta_k^{-1}\right)
 \frac{1}{2\pi i}
 \int_{\mathcal{L}}
   \Gamma(-s) \Gamma(1+s) z^{s}
   \left[
     \underbrace{
       \int_{0}^{\infty}
         e^{-\sum u_k/\beta_k}
         (1+\Sigma)^{-1-s}
        \mathrm{d}\mathbf{u}
     }_{\text{Inner Integral}}
   \right]
 \mathrm{d}s .
\label{eq:swapInt}
\end{equation}
\hrulefill
\vspace{-8pt}
\end{figure*}

It can be found that the inner integral part in \eqref{eq:swapInt} can be evaluated via a Dirichlet change of variables

% \begin{equation}
% \int_{0}^{\infty}
%       e^{-\sum u_k/\beta_k} 
%       (1+\Sigma)^{-1-s} 
%    d\mathbf u
% =\Gamma(1+s) 
%  \left(\prod_{k=1}^{K}\beta_k\right)^{s}.
% \label{eq:innerRes}
% \end{equation}

\begin{equation}
\begin{aligned}
&\int_{\mathbb{R}_+^{K}}
   \exp\!\left(-\sum_{k=1}^{K}\frac{u_k}{\beta_k}\right)
   \bigl(1+\Sigma\bigr)^{-1-s}\,
   \mathrm{d}\mathbf{u}  \\[6pt]
&\quad=\;
   \bigl[\Gamma(1+s)\bigr]^{K}\,
   \Omega_{t,r}^{\,(K-1)s}\,
   \left(\prod_{k=1}^{K}\beta_k\right)^{s},
\end{aligned}
\label{eq:innerRes}
\end{equation}

Substituting \eqref{eq:innerRes} into \eqref{eq:swapInt}, $f_W(w)$ can be further denoted as

\begin{equation}
f_W(w)=
\frac{1 }{\Omega_{tr}}
\frac{1}{2\pi i}
\int_{\mathcal L}
  \underbrace{\Gamma(-s)}_{\rm upper}\underbrace{[\Gamma(1+s)]^{K+1}}_{\rm bottom}
  \left(\frac{w \Omega_{tr}^{K-1}}{\prod\beta_k}\right)^{ s}
ds,
\label{eq:preH}
\end{equation}
which follows a Fox-H style.
Define the Fox--$H$ argument $\Xi  =  \frac{w \Omega_{tr}^{ K-1}}{\prod_{k=1}^{K}\beta_k}$ and according to \eqref{eq:FoxH_def}, it can be found that $p = 1, q = K+1, m= K+1$, and $n=1$. The Fox-H expression of \eqref{eq:preH} can be obtained as

\begin{equation}
f_W(w)=
\frac{1 }{\Omega_{tr}} 
H_{1,K+1}^{K+1,1} \left[
   \Xi
     | 
   \begin{array}{c}
     (1,1)\\[2pt]
     \underbrace{(1,1),\ldots,(1,1)}_{K+1}
   \end{array}
\right].
\label{eq:finalFoxH}
\end{equation}

For $\gamma_r = \dfrac{P}{\sigma_r^2}W$, the PDF of the legitimate channel can be calculated as:

\begin{align}
f_{\gamma_r}(\gamma)
&=\frac{\sigma_r^2}{P} f_W   (\tfrac{\sigma_r^2}{P}\gamma )\nonumber
 \\
&=\frac{\sigma_r^2 }{\Omega_{tr}P} 
H_{1,K+1}^{K+1,1}  \left[
\frac{\sigma_r^2\gamma \Omega_{tr}^{ K-1}}{P\prod_k\beta_k}
  | 
\begin{array}{c}
(1,1)\\[2pt]   \underbrace{(0,1),\ldots,(0,1)}_{K+1}
\end{array}
\right].
\end{align}
\section{Derivation of the PDF for \eqref{eq: EPDF}} \label{APPEN_EPDF}

For compactness, we introduce intermediate variable $\lambda    = \frac{1}{P\Omega_{t,e}}, 
A_k =    P a_k^{2}\Omega_{t,k}\Omega_{k,e},$ and 
$u_k(x) = \frac{P\Omega_{t,e}}{x A_k}$.

Now, the CDF of $\gamma_e$ in \eqref{eq:CDF_generic} becomes

\begin{equation}
F_{\gamma_e}(x)
   =1-e^{-\lambda\sigma_e^{2}x} 
      \mathcal L_I(\lambda x),  
\mathcal L_I(s)=\prod_{k=1}^{K}\mathcal L_{W_k}(sA_k).
\end{equation}

Differentiating the CDF, the PDF of $\gamma_e$ yields

\begin{align}
f_{\gamma_e}(x)
 &=\frac{d}{dx} [1-F_{\gamma_e}(x) ]\notag\\
 &=\lambda\sigma_e^{2} e^{-\lambda\sigma_e^{2}x}
     \mathcal L_I(\lambda x)
   -\lambda e^{-\lambda\sigma_e^{2}x} 
    \mathcal L_I'(\lambda x).
\label{S1}
\end{align}

To obtain the derivative of $\mathcal L_I$, we have 
\begin{equation}
\frac{d}{ds}\ln\mathcal L_{W_k}(s)
   =-\frac{1+u_k-\exp(-u_k)/E_1(u_k)}{s},
\end{equation}

so for $\mathcal L_I'(\lambda x)$, we have  

\begin{equation} 
\mathcal L_I'(\lambda x)=
     \mathcal L_I(\lambda x)
      \frac{P\Omega_{t,e}}{x}
     \sum_{k=1}^{K} 
        [
         -1-u_k(x)+\tfrac{e^{-u_k(x)}}{E_1(u_k(x))}
        ].
\label{S2}
\end{equation}

Substituting \eqref{S2} into \eqref{S1}, the final PDF of $\gamma_e$  can be obtained as follows: 

\begin{align}
f_{\gamma_e}(x) = &
 \lambda  
 e^{-\lambda \sigma_{e}^{2} x} 
 \mathcal{L}_I(\lambda x)
\nonumber \\[2pt]
&\times 
 \left[
   \sigma_{e}^{2}
   + \frac{P \Omega_{t,e}}{x}
     \sum_{k=1}^{K} 
       \left(
         1 + u_k(x)
         - \frac{e^{-u_k(x)}}{E_1 \left(u_k(x)\right)}
       \right)
 \right].
\end{align}

\section{Derivation of the Fox-H expression PDF for the SOP result \eqref{eq:SOP_final_corrected}} \label{APPEN_SOP}

Using the CDF of \(\gamma_r\) and the PDF of \(\gamma_e\), \eqref{eq:SOP_def} becomes the single integral
\begin{equation}\label{eq:SOP_master}
P_{\rm SOP}
  =\int_{0}^{\infty}
     F_{\gamma_r}   \bigl(\Theta(1+x)-1 )  
     f_{\gamma_e}(x)  \mathrm dx.
\end{equation}

Substituting \eqref{eq:grmmar} and \eqref{eq: EPDF} into \eqref{eq:SOP_master} yileds

\begin{equation}
P_{\rm SOP}
=\frac{1}{\Omega_{t,r}}
\int_{0}^{\infty}
      g(x) 
      \underbrace{H_{2,K+2}^{K+1,2}   \bigl(A_rg(x) )}_{%
      \substack{\text{legitimate‑link}\\[0.1em]\text{Fox–$H$ CDF}}} 
      \underbrace{f_{\gamma_e}(x)}_{\substack{\text{PDF of}\\[0.1em]\text{wiretap SINR}}}
       \mathrm dx,
\label{eq:SOP_integral}
\end{equation}
where $g(x)=   \Theta x+\Theta-1$, and  $A_r=\dfrac{\sigma_r^{2}\Omega_{t,r}^{K-1}}{P\prod_{k=1}^{K}\beta_k}$ represent the scale factor in legitimate CDF.
Using the Mellin definition, we expand the legitimate-link
Fox–H CDF term in \eqref{eq:SOP_integral} to 

\begin{equation}
H_{2,K+2}^{K+1,2}(A_rg)
=\frac{1}{2\pi i}\int_{\cal C_s}
   \Phi(s)  (A_rg)^{-s}   \mathrm ds,
   \label{eq:H}
\end{equation}
where $\Phi(s)=
\Gamma(1-s)\vphantom{\bigl|^{|^|}}
\bigl[\Gamma(s) ]^{K+1}.$

% The wiretap PDF has been compacted to
% \begin{equation}
% f_{\gamma_e}(x)
% =\lambda 
% e^{-\lambda\sigma_e^{2}x} 
% \underbrace{%
% H_{2K+2,0}^{0,2K+2} \bigl[\lambda x  \bigm| \mathcal A ]}_{
% \text{wiretap Fox–$H$}},                                          %  <-- no power of x
% \label{eq:PDF_eve_compact}
% \end{equation}
% where $\mathcal A
% =\bigl[(0,1)^K, (0,\tfrac12)^K, (0,1), (1,1) ]$

Insert \eqref{eq:H} into \eqref{eq:SOP_integral} yields:
\begin{equation}
P_{\rm SOP}
=
\frac{1}{\Omega_{t,r}}
\frac{1}{2\pi i}
\int_{\mathcal{L}_s}
\Phi(s)(A_r)^{-s}
\underbrace{\int_{0}^{\infty}g(x)^{1-s}f_{\gamma_e}(x) \mathrm{d}x}_{\mathcal{I}(s)}
 \mathrm{d}s.
 \label{SOP33}
\end{equation}

Define $g(x)=\Theta x + \Theta - 1 = (\Theta-1)  (1 + d x),
c= \Theta - 1, d = \frac{\Theta}{c}$, the inner $x$–integral $\mathcal I(s)$ in \eqref{SOP33} can be rewritten as an explicit dependence on the threshold $\Theta$.

\begin{equation}
\mathcal{I}(s)=c^{1-s}\int_{0}^{\infty}(1 + d  x)^{1-s}  f_{\gamma_e}(x)  \mathrm{d}x. 
\label{Is}
\end{equation}

Now $g(x)^{1-s}=c^{1-s}(1+dx)^{1-s}$, let $\mu=1-s$. Expand
\((1+dx)^{u} \) via the MB identity

\begin{equation}
(1 + d x)^\mu
= \frac{1}{2\pi i}
  \int_{C_u}
    \frac{\Gamma(-u) \Gamma(\mu + u)}
         {\Gamma(\mu)}
     (d x)^u
   \mathrm{d}u.
   \label{1+dx}
\end{equation}

Meanwhile, the eavesdropping PDF \eqref{eq: EPDF} can be compacted to

\begin{equation}
f_{\gamma_e}(x)
=\lambda 
e^{-\lambda\sigma_e^{2}x} 
H_{2K+2,0}^{0,2K+2} \bigl[\lambda x  \bigm| \mathcal A ],
\label{eq:PDF_eve_compact}
\end{equation}
where $\mathcal A
=\bigl[(0,1)^K, (0,\tfrac12)^K, (0,1), (1,1) ]$

Substituting \eqref{1+dx} and \eqref{eq:PDF_eve_compact} into \eqref{Is}, and interchanging the $u$ and $x$ integrals, we have \eqref{Is2}.

\begin{figure*}[t]
			\normalsize
\begin{equation}
\mathcal{I}(s)
= c^{1-s}  
\frac{\lambda}{2\pi i  \Gamma(s-1)}
\int_{C_u}
\Gamma(-u)  \Gamma(s-1+u)  d^{  u}  
\underbrace{%
  \int_{0}^{\infty}
    x^{-u}  e^{-\lambda\sigma_e^2 x}  
    H_{2K+2,0}^{0,2K+2}\bigl[\lambda x ]
    \mathrm{d}x
}_{\mathcal{L}(u)}
  \mathrm{d}u.
\label{Is2}
\end{equation}
\hrulefill
\vspace{-8pt}
\end{figure*}

In \eqref{Is2}, we can find that the $\Theta$-correlated scale $c^{1-s}$ have been isolated, and $\mathcal{L}(u)$ is a functions on channel statistics, which can be calculated as 

\begin{align}
\mathcal L(u)
&=\int_{0}^{\infty}
      x^{-u} 
      e^{-\lambda\sigma_e^{2}x} 
      H_{2K+2,0}^{0,2K+2} \bigl[\lambda x \bigm|\mathcal A ]
      \mathrm{d}x   \nonumber \\
&=(\lambda\sigma_e^{2})^{-1+u} 
H_{2K+3,0}^{0,2K+3}
   \Bigl[\sigma_e^{-2}  \bigm| \mathcal A,(1-u,1) ].
   \label{LuFinal}
\end{align}

\begin{figure*}[t]
			\normalsize
\begin{equation}
\mathcal I(s)=
c^{1-s}  (\lambda\sigma_e^{2})^{-1}  
\frac{\lambda}{2\pi i  \Gamma(s-1)}
\underbrace{\int_{C_u} 
  \Gamma(-u)  \Gamma(s-1+u)  
  (\lambda\sigma_e^{2}d)^{u}  
  H_{2K+3,0}^{0,2K+3}
     \Bigl[\sigma_e^{-2}  \bigm|  
            \mathcal A,  (1-u,1) ]
\mathrm du}_{J(u)} .
\label{Is3}
\end{equation}
\hrulefill
\vspace{-8pt}
\end{figure*}

Substitute \eqref{LuFinal} into \eqref{Is2}, we have \eqref{Is3}, where the integral for $u$ in $J(u)$ can be calculated using Kilbas-Saigo Convolution Theorem \cite{mathai2009h} as follows:

\begin{align}\label{eq:A4}
\int_{C_u}
  &\Gamma(-u)  \Gamma(s-1+u)  
  x^{u}  
  H_{2K+3,0}^{0,2K+3}
   \Bigl[\sigma_e^{-2}  \bigm|  
          \mathcal A,  (1-u,1) ]
  \mathrm du \nonumber \\
&=
H_{2K+4,3}^{0,2K+3}
  \Bigl[\sigma_e^{-2}  \bigm|  
         \mathcal A,  (2-s,1) ]  
 x^{  s-1},
\end{align}

By substituting \eqref{eq:A4} and \eqref{Is3} into \eqref{SOP33}, the SOP can be obtained in \eqref{SOP_final}.

\begin{figure*}[t]
			\normalsize
\begin{align}
P_{\mathrm{SOP}}
&=\frac{1}{\Omega_{t,r}} 
 \frac{1}{2\pi i}
 \int_{C_s}
   \Phi(s) (A_r)^{-s} c^{1-s} d^{ s-1} \lambda 
   \bigl(\lambda\sigma_e^2 )^{-1} 
   H^{0,2K+3}_{2K+4,3}
   \Bigl[
     \sigma_e^{-2} \Bigm| \mathcal{A}, (2-s,1)
    ]
  \mathrm{d}s \nonumber \\
&= \frac{c}{\Omega_{t,r} d  \sigma_e^2} 
 \frac{1}{2\pi i}
 \int_{C_s}
   \Phi(s) 
   \Bigl(\frac{A_r c}{d} )^{-s} 
   H^{0,2K+3}_{2K+4,3}
   \Bigl[
     \sigma_e^{-2} \Bigm| \mathcal{A}, (2-s,1)
    ]
  \mathrm{d}s.
\label{SOP_final}
\end{align}
\hrulefill
\vspace{-8pt}
\end{figure*}

Applying the Kilbas-Saigo Convolution Theorem a second time eliminates the $s$‑integral, uplifting the Fox–$H$ orders by one in each row for the final SOP expression:

\begin{equation}
P_{\mathrm{SOP}}
= \frac{c}{\Omega_{t,r}  \sigma_{e}^{2}}
  H_{2K+3,0}^{0, 2K+3} \left[
    \tfrac{d}{A_r c} \sigma_{e}^{-2}
     \bigm| 
    (0,1), \mathcal{A}
  \right].
\end{equation}

% where $ A_{r}
% = \frac{\sigma_{r}^{2} \Omega_{t,r}^{ K-1}}
% {P\prod_{k=1}^{K}\beta_{k}}, 
% c = \Theta - 1, 
% d = \frac{\Theta}{c}, 
% \lambda = \frac{1}{P \Omega_{t,e}}, 
% \mathcal{A} = [(0,1)^{K}, (0,\tfrac12)^{K}, (0,1), (1,1)] $.

\section{Proof of Theorem \ref{corollary1}} \label{APPEN_Corollary1}

Define the instantaneous threshold $T    =   \Theta  (1+\gamma_e )-1$, and the normalised transmit SNR $\rho    =   \frac{P}{\sigma_r^{2}}$. Thus, \eqref{eq:SOP_def} becomes 

%(\rho\to\infty \Longleftrightarrow P \to\infty)
\begin{equation}
P_{\mathrm{SOP}}
 = \Pr \{\gamma_r < \Theta (1 + \gamma_e) -1 \}
 = 
\int_{0}^{\infty}
F_{\gamma_r} (T(x) ) 
f_{\gamma_e}(x) 
\mathrm{d}x,
\label{eq: AsmSOP}
\end{equation}

Since $\gamma_r = \rho W$, $F_{\gamma_r}(y)$ can be further expressed as 

\begin{equation}
F_{\gamma_r}(y)
 = 
\Pr\{\rho W < y\}
 = 
F_{W}  (\tfrac{y}{\rho} ).
\label{F_yr}
\end{equation}

When $\rho\to\infty$, we have $\tfrac{y}{\rho} \to 0$. Thus, $F_{W}(w)$ can be expressed through a first-order Taylor expansion:

\begin{align}
F_{W}(w)
&=\left.\frac{\mathrm{d}F_{W}}{\mathrm{d}w}\right|_{w=0}w + 
o (\rho^{-1} )\nonumber \\
&=f_{W}(0) w + 
o (\rho^{-1} )\nonumber  \\
&=w\int_{0}^{\infty}
e^{-\Omega_{t,r} t}
\prod_{k=1}^{K}\frac{1}{1+\beta_{k}t}
 \mathrm{d}t + 
o (\rho^{-1} ).
\label{F_w}
\end{align}

Substituting \eqref{F_w} into \eqref{F_yr}, we have 

\begin{equation}
F_{\gamma_r}(T)
=
F_{W}  (\tfrac{T}{\rho} )
=
\frac{T}{\rho}\int_{0}^{\infty}
e^{-\Omega_{t,r} t}
\prod_{k=1}^{K}\frac{1}{1+\beta_{k}t}
 \mathrm{d}t + 
o (\rho^{-1} )
.
\label{eq:AsmFr}
\end{equation}

Substituting \eqref{eq:AsmFr} into \eqref{eq: AsmSOP}, we have

\begin{equation}
P^{\rm Asy}_{\mathrm{SOP}}
=
\frac{1}{\rho}\int_{0}^{\infty}
e^{-\Omega_{t,r} t}
\prod_{k=1}^{K}\frac{1}{1+\beta_{k}t}
 \mathrm{d}t
\int_{0}^{\infty}
T(x) f_{\gamma_e}(x) \mathrm{d}x,
\label{asy_2}
\end{equation}
where we can find that $\int_{0}^{\infty}
T(x) f_{\gamma_e}(x) \mathrm{d}x$ is a weighted average of the PDFs at the eavesdropping end. It can be calculated as

\begin{equation}
\int_{0}^{\infty}T(x) f_{\gamma_e}(x) \mathrm{d}x
=\mathbb{E}[T] =
\Theta (1 + \bar\gamma_e )-1,
\label{1}
\end{equation}
where \(\bar\gamma_e\) represents the average SINR at the eavesdropping link, which can be obtained from the Laplace transform of the interference at the eavesdropping end by a strict integral equation

\begin{equation}
\bar\gamma_e
= P \Omega_{t,e}
\int_{0}^{\infty}
e^{-\sigma_e^2 u}
\mathcal{L}_I(u) \mathrm{d}u,
\label{2}
\end{equation}
where $\mathcal{L}_I(u)$ has been calculated in \eqref{eq:LI_final}

By substituting \eqref{1} and \eqref{2} into \eqref{asy_2}, we obtain the asymptotic value as

\begin{equation}
\label{eq:SOP-1st}
P^{\rm Asy}_{\mathrm{SOP}}=
\frac{\sigma_r^{2}}{P \Omega_{t,r}} 
[\Theta (1+\bar\gamma_e)-1]
\underbrace{\int_{0}^{\infty}
           e^{-\Omega_{t,r}t}
           \prod_{k=1}^{K} \frac{dt}{1+\beta_k t}}
_{C_{\text{leg}}},
\end{equation}
where $(1+\beta_k t)^{-1} =\int_{0}^{\infty} e^{-u_k}e^{-\beta_k t u_k} du_k$, using Laplace transform. After inserting all the \(K\) factors, we obtain the following:

\begin{align}
C_{\text{leg}}
&=\int_{0}^{\infty} e^{-\Omega_{t,r}t}
  \prod_{k=1}^{K} (1+\beta_k t )^{-1} dt  \notag\\
&=\int_{0}^{\infty}  \cdots \int_{0}^{\infty}
   \frac{e^{-\sum_{k=1}^{K}u_k}}
        {\Omega_{t,r}+\sum_{k=1}^{K}\beta_k u_k}
    du_1 \cdots du_K .                       \label{eq:C_leg_multi_int}
\end{align}

This \(K\)-dimensional integral converges for any positive real \(\beta_k\). 
Use  Mellin-Barnes
\((\Omega_{t,r}+\sum\beta_k u_k)^{-1}
  =(2\pi i)^{-1} \int_{\cal C_s} 
    \Gamma(s)\Gamma(1-s) 
     (\Omega_{t,r} )^{-s}
     (\sum\beta_k u_k )^{ s-1} ds\), \eqref{eq:C_leg_multi_int} can be transformated to a standard  Lauricella-type functions  \cite{dickey1983multiple} as follows:

\begin{equation}\label{cleg}
C_{\text{leg}}
  = \frac{1}{\Omega_{t,r}} 
    F_{D}^{(K)} \left(
      1; 
      \underbrace{1, \ldots, 1}_{K};
      K+1; 
      -\tfrac{\Omega_{t,r}}{\beta_{1}}, 
      \ldots, 
      -\tfrac{\Omega_{t,r}}{\beta_{K}}
    \right).
\end{equation}

% For the Fox-H definition, choose  $p=1,q=K+1,
% (a_1,A_1)=(1,1),
% (b_1,\ldots,b_{K},b_{K+1})=(1,\ldots,1,0),
% B_j=1\ \forall j$, then
% \begin{equation}
% C_{\text{leg}}=
% \frac{1}{\Omega_{t,r}}
% H^{K+1,1}_{1,K+1} 
% \left[
%   \frac{\Omega_{t,r}^{K}}{\prod_{k}\beta_k}
%     | 
%   \begin{array}{c}
%      (1,1) \\[2pt]
%      (1,1)^{K}, (0,1)
%   \end{array}
% \right].
% \label{cleg}
% \end{equation}

Applying
\(e^{-\sigma_e^{2}u}
 =(2\pi i)^{-1} \int_{\mathcal C_s} 
   \Gamma(s)(\sigma_e^{2}u)^{-s}ds\)
and the Mellin form of \(\mathcal L_I(u)\),  \(\gamma_e\) becomes  
\begin{equation}
\bar\gamma_e=
P\Omega_{t,e} 
H^{0, 2K+2}_{ 2K+2, 0} 
\left[
  \sigma_e^{-2}
    | 
  (0,1)^{K}, (0,\tfrac12)^{K}, (0,1), (1,1)
\right].
\label{gamme}
\end{equation}

Substituting \eqref{gamme} and \eqref{cleg} into \eqref{eq:SOP-1st}, the asymptotic expression of the SOP yields in \eqref{Asm_finnal}.

\begin{figure*}[t]
			\normalsize
            {\small
\begin{equation}
P^{\text{Asy}}_{\mathrm{SOP}}
= \frac{\sigma_{r}^{2}}{P \Omega_{t,r}}
  \Biggl[
    \Theta - 1
    + \Theta P \Omega_{t,e} 
      H^{0, 2K+2}_{2K+2, 0} \left[
        \sigma_{e}^{-2} \Bigm| 
        (0,1)^{K}, (0,\tfrac{1}{2})^{K}, (0,1), (1,1)
      \right]
  \Biggr]
  F_{D}^{(K)} \left(
    1; 
    \underbrace{1,\ldots,1}_{K};
    K+1; 
    -\tfrac{\Omega_{t,r}}{\beta_{1}}, 
    \ldots, 
    -\tfrac{\Omega_{t,r}}{\beta_{K}}
  \right).
\end{equation} }
\hrulefill
\vspace{-8pt}
\end{figure*}

\section{Proof of Theorem \ref{corollary2}} \label{APPEN_Corollary2}

The SDO is determined as follows:

\begin{equation}
  D_s  =
  - \lim_{P\to\infty}\frac{\log P_{\mathrm{SOP}}^{\mathrm{Asy}}}{\log P } .
\end{equation}
where $P_{\mathrm{SOP}}^{\mathrm{Asy}}$ is expressed in ~\eqref{eq:SOP-1st},

The instantaneous SINR at Eve is
\(
  \gamma_e
  =\dfrac{P|h_{t,e}|^{2}}
         {\sigma_e^{2}+P \displaystyle\sum_{k}a_k^{2}
                      |h_{t,k}|^{2}|h_{k,e}|^{2}}
  \).
Re-scaling both numerator and denominator by~\(P\) and letting
\(P\to\infty\) yields
\begin{equation}
  \lim_{P \to \infty} \gamma_e
   =
  \frac{|h_{t,e}|^{2}}{\sum_{k} a_k^{2}|h_{t,k}|^{2}|h_{k,e}|^{2}}
  ,
\end{equation}
where expectation $\bar{\gamma}_e^{(\infty)}
   =\mathbb{E}  [\gamma_e^{(\infty)} ]$
is a finite constant because the denominator is a
strictly positive continuous random variable. Defining the auxiliary variable $c_0 = \frac{\sigma_r^2}{\Omega_{t,r}}  C_{\rm leg}   [\Theta \left(1+\bar{\gamma}_e^{(\infty)}\right)-1 ]$ and $P_{\mathrm{SOP}}^{\mathrm{Asy}}
    =\frac{c_0}{P}$, we have
    
\begin{equation}
  D_s
  =-\lim_{P\to\infty}
     \frac{-\log P + \log c_0 + o(1)}{\log P}
  = 1.
\end{equation}

%\section*{Acknowledgment}
%This work was supported in part by the National Natural Science Foundation of China under Grant 62072351; in part by the Key Research Project of Shaanxi Natural Science Foundation under Grant 2023-JC-ZD-35; in part by the open research project of ZheJiang Lab under grant 2021PD0AB01; in part by the Academy of Finland under Grant 345072 and Grant 350464; and in part by the U.S. National Science of Foundation through the Networking Technology and Systems (NeTS) Program under Award 2131507. (Corresponding author: Zheng Yan.)

\bibliographystyle{IEEEtran}
\bibliography{IEEEabrv,reference}

\end{document}